%% file: main.tex
\documentclass[journal]{IEEEtran}

\input{preamble}

\begin{document}
\bstctlcite{IEEEexample:BSTcontrol}

\title{Markov State--Space Modeling and Channel Characterization for DNA-Based Molecular Communication}
\author{Ruifeng~Zheng, Zhihan Xu, Veronika Volkova, Pengjie Zhou, Martín Schottlender, Juan A. Cabrera, Frank H.\,P. Fitzek, and Pit Hofmann
\thanks{Part of this work was presented at IEEE Global Communications Conference 2025~\cite{zheng2025DNA-Based}. R.~Zheng, Z.~Xu, V.~Volkova, P.~Zhou, M.~Schottlender, J.\,A.~Cabrera, F.\,H.\,P.~Fitzek and P.~Hofmann are with the Deutsche Telekom Chair of Communication Networks, Dresden University of Technology, Germany; J.~Cabrera, F.~Fitzek and P.~Hofmann are also with the Centre for Tactile Internet with Human-in-the-Loop (CeTI), Dresden, Germany, email: \{ruifeng.zheng, zhihan.xu, veronika.volkova, pengjie.zhou, martin.schottlender, juan.cabrera, frank.fitzek, pit.hofmann\}@tu-dresden.de.}
}
\maketitle

\begin{abstract}
In this paper, we study DNA-based molecular communication with microarray-style reception under reversible hybridization, where the bound-state observation exhibits both inter-symbol interference and colored counting noise. To capture these effects in a communication-oriented form, we develop a Markov state-space framework based on a voxelized reaction--diffusion model, in which a block-structured transition matrix describes molecular transport and binding/unbinding dynamics. For the microarray specialization, this representation yields the channel impulse response, the equilibrium gain, and a settling-time-based characterization of the effective channel memory. Building on the resulting symbol-rate observation model for on--off keying, we derive a grouped-binomial counting model and obtain a closed-form expression for the covariance of the counting noise. Based on these statistics, we further develop a differential-threshold detector and a finite-memory decision-feedback equalizer. Numerical results validate the theoretical correlation behavior and show that the relative performance of the proposed receivers depends strongly on the channel-memory regime.
\end{abstract}

\begin{IEEEkeywords}
Markov chain, DNA, channel characterization, molecular communication, detection, inter-symbol interference (ISI), colored noise.
\end{IEEEkeywords}

\section{Introduction}
\label{sec:intro}

\IEEEPARstart{M}{olecular} communication (MC) enables information exchange at the nanoscales level where conventional electromagnetic signaling is often infeasible or inefficient~\cite{yang2020comprehensive,bi2021survey,akan2023internet,kong2023survey,aktas2024odor}. Among different MC modalities, DNA-based MC is particularly attractive due to the programmability, specificity, and stability of DNA hybridization~\cite{liu2021dna,gomez2024dna,preuss2024sequencing}. This motivates microarray-style receivers (RXs), in which transmitted information is inferred from the population of surface-bound DNA molecules formed under reversible hybridization dynamics~\cite{chan1995biophysics,hassibi2005biological,liu2021dna}.

A key feature of microarray-style DNA reception is that the RX observes an inventory-like bound-state population rather than cumulative molecular arrivals~\cite{chan1995biophysics,hassibi2005biological}. Under reversible binding and unbinding, previously bound molecules may persist across multiple symbol intervals, which induces inter-symbol interference (ISI) and yields a long \emph{effective channel memory}~\cite{jamali2019channel,zheng2025molecular}. At the same time, the same molecules may contribute to multiple symbol-time observations, which gives rise to temporally correlated, i.e., colored, \emph{counting noise}. Therefore, a communication-oriented model for DNA microarray channels should characterize both the mean response and the symbol-rate covariance of the counting noise.

\subsection{Related Works and Main Challenges}
\label{subsec:intro_related_works}

Existing MC channel models have largely focused on diffusion with absorbing or partially absorbing receivers~\cite{yilmaz2014three,yilmaz2014arrival,cao2020optimal,ferrari2022channel,sabu2022channel}. While these models provide useful insights, they do not capture the reversible, inventory-like reception mechanism of DNA microarray channels. More recent communication-oriented studies have also considered equilibrium signaling strategies and practical biosensor-assisted microfluidic receivers~\cite{akdeniz2020equilibrium,abdali2024frequency}, but these models still do not represent spatially confined DNA hybridization with planar microarray-like RXs and reversible surface binding. Meanwhile, DNA-based MC has been investigated from both theoretical and experimental perspectives, including studies on channel capacity, communication protocols, parallel transmission architectures, microscale modulators, and synthetic DNA receivers~\cite{sun2019channel,bilgin2018dna,wang2023novel,luo2020small,zhang2023hardware}. However, these studies generally do not provide a communication-oriented model for spatially confined DNA hybridization channels with planar microarray-like receivers and reversible surface binding. In parallel, the biosensor and microarray literature has long studied diffusion--reaction coupling and reversible surface chemistry in bounded geometries~\cite{pappaert2003diffusion,squires2008making,chan1995biophysics}, but these models are rarely formulated in a communication-oriented form that directly yields symbol-rate channel metrics and receiver-relevant noise statistics.

Another major challenge lies in noise modeling. Standard MC RX models often adopt independent Poisson counting or temporally white Gaussian approximations~\cite{yilmaz2014arrival,li2016local,cao2020optimal}. Related communication-oriented receiver models have also considered Poisson arrival uncertainty together with receiver-side baseline noise and nonlinear activation, as in olfaction-inspired cross-reactive arrays~\cite{jamali2023olfaction}. Prior studies on reactive receivers have further recognized that ligand--receptor binding itself introduces additional stochasticity beyond the simple diffusion-counting noise~\cite{pierobon2011noise}. However, these models do not directly capture the symbol-rate-colored counting noise arising in reversible DNA reception. In DNA microarray channels, the symbol-rate observation is affected not only by ISI in the mean response, but also by temporally correlated counting noise induced by reversible molecular persistence at the receiver. Hence, both mean-domain memory and noise-domain memory should be explicitly characterized.

\subsection{Main Contributions and Structure}
\label{subsec:intro_approach}

Our prior conference paper~\cite{zheng2025DNA-Based} presented a preliminary Markov-based model for DNA microarray channels. The present paper substantially extends that foundation by formulating a general block-structured Markov state-space framework, deriving a grouped-binomial symbol-rate observation model and a closed-form covariance expression for the counting noise, establishing a lag-dependent decorrelation result, and designing low-complexity receivers based on a differential-threshold detector and a finite-memory decision-feedback equalizer (DFE). 

In this work, we develop a Markov state-space framework for DNA-based MC with reversible hybridization. Starting from a voxelized reaction--diffusion model, we construct a block-structured transition matrix that captures molecular transport and binding/unbinding dynamics in a communication-oriented form. For the considered microarray specialization, this representation yields the channel impulse response (CIR), the equilibrium gain, and a settling-time-based characterization of the effective channel memory.

The main contributions of this paper are summarized as follows:
\begin{itemize}
    \item We develop a Markov state-space framework for DNA-based MC with reversible receptor binding.
    \item We characterize the microarray channel in terms of its CIR, equilibrium gain, and effective channel memory.
    \item We derive a grouped-binomial symbol-rate observation model and a closed-form expression for the covariance of the counting noise, which explicitly reveals its colored nature.
    \item We develop two low-complexity receivers, namely a differential-threshold detector and a finite-memory decision-feedback equalizer (DFE), and evaluate their bit error rate (BER) performance across different channel-memory regimes. 
\end{itemize}

The remainder of this paper is organized as follows. \Cref{sec:general_framework} presents the general Markov state-space framework. \Cref{sec:channel_specialization} specializes this framework to the DNA microarray channel and derives the corresponding analytical channel characterization. \Cref{sec:io_noise} develops the symbol-rate observation model and characterizes the covariance of the counting noise. \Cref{sec:detection} presents the proposed detectors. \Cref{sec:numerical_results} provides numerical results, and \Cref{sec:conclusion} concludes the paper and outlines future research directions.

\textit{Notation:} $\mathbb{R}^{m\times n}$ denotes the set of all $m\times n$ real-valued matrices, and $\mathbb{Z}_{\ge 0}$ denotes the set of nonnegative integers. Boldface uppercase and lowercase letters denote matrices and vectors, respectively. The superscript $(\cdot)^\top$ denotes transpose, and $\mathbbm{1}\{\cdot\}$ denotes the indicator function. Moreover, $\mathcal{N}(\mu,\sigma^2)$ and $\mathcal{B}(n,p)$ denote the Gaussian and binomial distributions, respectively.
\section{System Model and General Markov Framework}
\label{sec:general_framework}

In this section, we develop a general Markov state-space framework for DNA-based MC with reversible hybridization at a reactive receiver boundary. By voxelizing the underlying reaction--diffusion system, we obtain a finite-state Markov chain whose block-structured transition matrix captures molecular transport and binding/unbinding dynamics in a unified form. This section focuses on the general construction, while the subsequent closed-form characterization and receiver design are specialized to the one-dimensional (1D) microarray channel in~\cref{sec:channel_specialization,sec:io_noise,sec:detection}.

\begin{figure}
    \centering
    \includegraphics[width=0.6\linewidth]{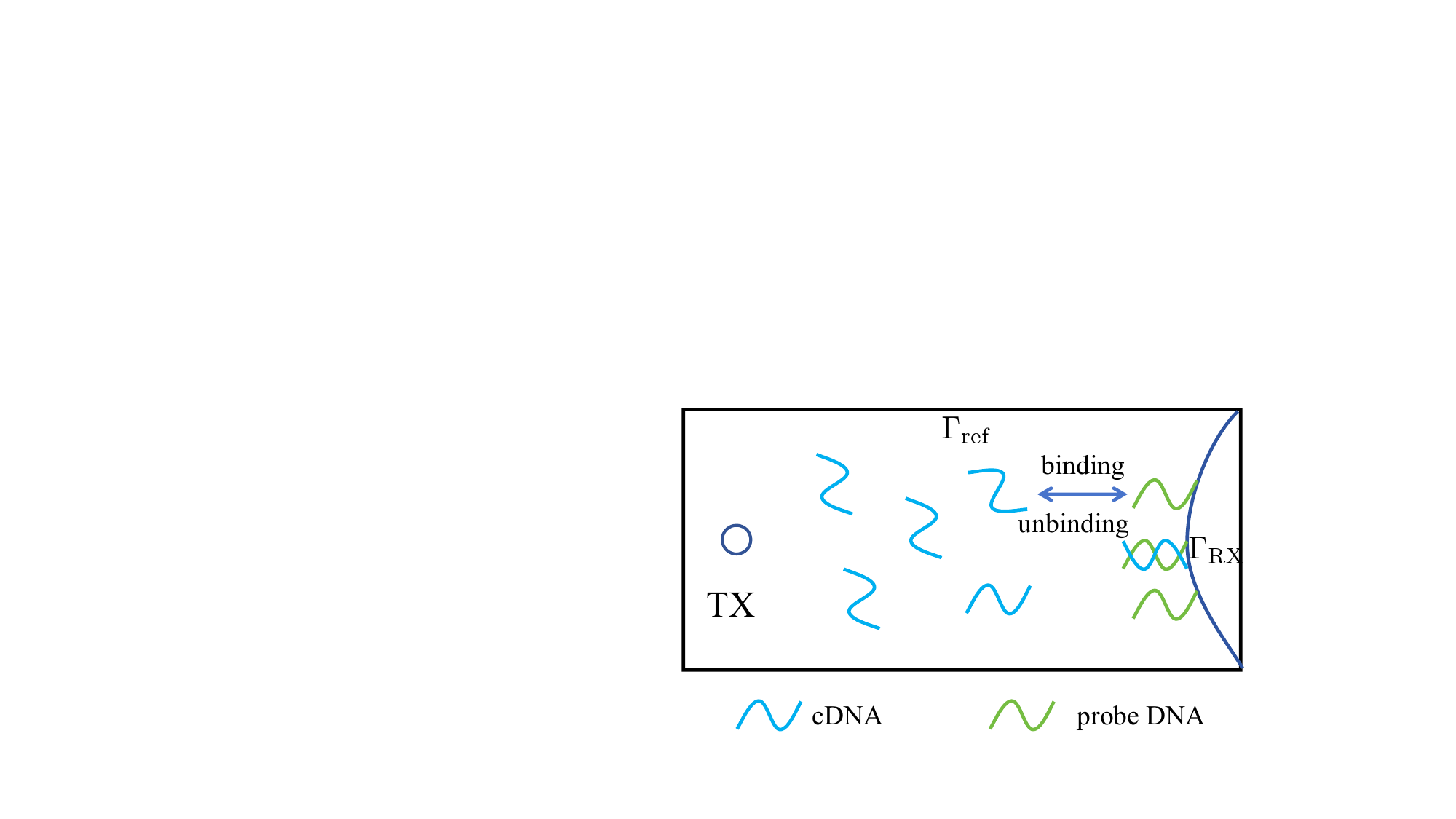}
    \caption{Physical model of the considered DNA microarray channel. cDNA molecules released from the TX propagate in a bounded domain $\Omega$ toward a reactive receiver boundary $\Gamma_{\mathrm{RX}}$ with reversible hybridization, while the remaining boundary $\Gamma_{\mathrm{ref}}$ is reflecting. The RX observation is the number of bound cDNA molecules.}
    \label{fig:system_model}
    \vspace{-0.3cm}
\end{figure}

\subsection{Physical Model and Discrete-State Representation}
\label{subsec:physical_and_states}

We consider a DNA-based MC system in a bounded domain $\Omega$, where the transmitter (TX) releases complementary DNA (cDNA) molecules that propagate toward a microarray RX. The RX occupies a reactive subset of the boundary, $\Gamma_{\mathrm{RX}} \subset \partial\Omega$, while the remaining boundary $\Gamma_{\mathrm{ref}} \triangleq \partial\Omega \setminus \Gamma_{\mathrm{RX}}$ is reflecting; see~\cref{fig:system_model}. 
The concentration of free cDNA molecules at position $\mathbf r=(x,y,z)$ and time $t$ is denoted by $c(\mathbf r,t)$. Its spatiotemporal evolution is governed by the convection--diffusion equation
\begin{equation}
    \frac{\partial c}{\partial t}
    = \nabla \cdot \left( D \nabla c - \mathbf{v}\, c \right),
    \qquad \mathbf{r}\in\Omega,
    \label{eq:pde_cd}
\end{equation}
where $D$ is the diffusion coefficient and $\mathbf{v}=(v_x,v_y,v_z)$ is the drift velocity vector.

At the reactive boundary $\Gamma_{\mathrm{RX}}$, free cDNAs reversibly hybridize with surface-immobilized probe DNAs. Since microarray-style reception observes the bound-state population, we explicitly introduce the surface density of bound cDNAs, denoted by $c_{\mathrm b}$, on $\Gamma_{\mathrm{RX}}$. The coupling between transport in the medium and surface reactions is described by the Robin-type boundary condition~\cite{squires2008making}
\begin{equation}
    -\left( D\nabla c - \mathbf{v}\,c \right)\cdot \mathbf{n}
    = k_{\mathrm{on}}\,c_{\mathrm{p}}\,c -k_{\mathrm{off}}\,c_{\mathrm{b}},
    \quad \mathbf{r}\in\Gamma_{\mathrm{RX}},\ t>0,
    \label{eq:robin_bc}
\end{equation}
where $\mathbf{n}$ is the outward unit normal vector, $c_{\mathrm p}$ is the surface density of available probes, and $k_{\mathrm{on}}$ and $k_{\mathrm{off}}$ are the binding and unbinding rates, respectively. The same reaction flux governs the bound-state dynamics,
\begin{equation}
    \frac{\partial c_{\mathrm{b}}}{\partial t}
    = k_{\mathrm{on}}\,c_{\mathrm{p}}\,c\big|_{\Gamma_{\mathrm{RX}}}
    - k_{\mathrm{off}}\,c_{\mathrm{b}}.
    \label{eq:cb_ode}
\end{equation}
On the reflecting boundary $\Gamma_{\mathrm{ref}}$, we impose
\begin{equation}
    \left( D\nabla c - \mathbf{v}\,c \right)\cdot \mathbf{n} = 0,
    \qquad \mathbf{r}\in\Gamma_{\mathrm{ref}},\ t>0.
    \label{eq:reflect_bc}
\end{equation}

To obtain a tractable communication-oriented model, we approximate the reaction--diffusion system in~\cref{eq:pde_cd,eq:robin_bc,eq:cb_ode,eq:reflect_bc} by a discrete-state Markov chain using spatial voxelization and a sufficiently small sampling interval $\Delta t$. From a single-molecule perspective, this yields a probability-preserving random walk in which, during each interval $\Delta t$, a molecule may diffuse to a neighboring voxel, bind or unbind at the receiver boundary, or remain in its current state. The step size $\Delta t$ is chosen sufficiently small so that all the probabilities of one-step transition are nonnegative and sum to one. Unless stated otherwise, we focus on the pure-diffusion case in the subsequent Markov construction.

Throughout the stochastic channel and RX modeling, we adopt a non-saturating probe assumption, i.e., the number of released molecules is sufficiently small relative to the available probe capacity such that receptor saturation can be neglected. Physically, this means that the bound fraction remains well below full occupancy. Under this assumption, different molecules evolve independently, and the total symbol-rate observation can be represented as the superposition of independent single-molecule contributions. Similar non-saturation assumptions have also been adopted in related studies, e.g., \cite{hassibi2005biological,pierobon2011noise,kuscu2016physical}. High-occupancy or saturation-limited regimes, in which receptor competition couples molecular trajectories, are outside the scope of the present work.

We define two disjoint state subsets. The \emph{free-state set} contains the volume voxels in $\Omega$,
\begin{equation}
    \mathcal{S}_{\mathrm{free}}=\{s_{1}, s_{2}, \ldots, s_{N_\mathrm{f}}\},
    \label{eq:S_free}
\end{equation}
where $N_\mathrm{f}$ is the number of free-diffusion states. The \emph{bound-state set} contains states associated with reactive boundary subregions on $\Gamma_{\mathrm{RX}}$,
\begin{equation}
    \mathcal{S}_{\mathrm{bound}} =\{s_{N_\mathrm{f}+1}, \ldots, s_{N_\mathrm{f}+N_\mathrm{b}}\},
    \label{eq:S_bound}
\end{equation}
where $N_\mathrm{b}$ is the number of bound states. The complete state space is
\begin{equation}
    \mathcal{S}=\mathcal{S}_{\mathrm{free}}\cup\mathcal{S}_{\mathrm{bound}},
\end{equation}
with $N=N_{\mathrm f}+N_{\mathrm b}$ states.

The state of a single information molecule at time $t=n\Delta t$ is denoted by $R[n]\in\mathcal{S}$. We model $\{R[n]\}$ as a time-homogeneous Markov chain with one-step transition probabilities
\begin{equation}
    P_{i,j}
    \triangleq
    \Pr\left\{ R[n+1]=s_i \mid R[n]=s_j \right\},
    \label{eq:one_step_transition}
\end{equation}
where $i$ denotes the target state and $j$ denotes the source state.

\subsection{Transition Matrix Construction}
\label{subsec:transition_matrix}

Stacking the one-step transition probabilities in~\cref{eq:one_step_transition} yields a column-stochastic transition matrix $\mathbf{P}\in\mathbb{R}^{N\times N}$, where each column sums to one. With the state ordering $(s_{1},\ldots,s_{N_{\mathrm{f}}},\,s_{N_{\mathrm{f}}+1},\ldots,s_{N_{\mathrm{f}}+N_{\mathrm{b}}})$, $\mathbf{P}$ admits the block form
\begin{equation}
    \mathbf{P}=
    \begin{bmatrix}
    \mathbf{Q} & \mathbf{U} \\
    \mathbf{B} & \mathbf{R}
    \end{bmatrix},
    \label{eq:P_blocks}
\end{equation}
where $\mathbf{Q}\in\mathbb{R}^{N_\mathrm{f}\times N_\mathrm{f}}$ describes transport among free states, $\mathbf{B}\in\mathbb{R}^{N_\mathrm{b}\times N_\mathrm{f}}$ describes binding transitions from free states into bound states, $\mathbf{U}\in\mathbb{R}^{N_\mathrm{f}\times N_\mathrm{b}}$ describes unbinding transitions back into free states, and $\mathbf{R}\in\mathbb{R}^{N_\mathrm{b}\times N_\mathrm{b}}$ describes retention within the bound-state set. Hence, the block structure separates transport in the medium from receptor kinetics at the receiver boundary.

\begin{figure}[t]
    \centering
    \subfigure[Interior voxel.]{
        \centering
        \includegraphics[width=0.45\columnwidth]{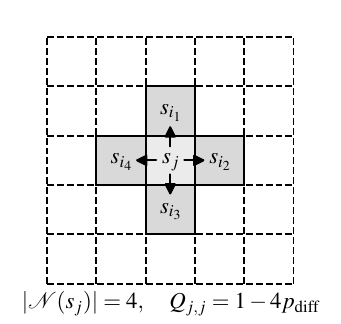}
        \label{fig:Q_neighborhood_interior}
    }\hfill
    \subfigure[Reflecting boundary.]{
        \centering
        \includegraphics[width=0.45\columnwidth]{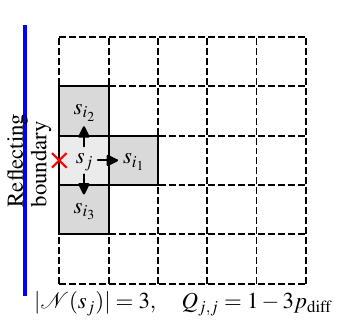}
        \label{fig:Q_neighborhood_reflecting}
    }
    \caption{Neighborhood set $\mathcal{N}(s_j)$ used to construct the diffusion block $\mathbf{Q}$ (2D example). (a) Interior voxel: $|\mathcal{N}(s_j)|=4$. (b) Voxel adjacent to a reflecting boundary: one diffusion direction is blocked, yielding $|\mathcal{N}(s_j)|=3$. The construction extends to 1D and 3D by replacing the interior neighborhood size with $2$ and $6$, respectively.}
    \label{fig:Q_neighborhood}
    \vspace{-0.3cm}
\end{figure}

To obtain a probability-preserving Markov discretization, we first consider the pure-diffusion case with $\mathbf v=\mathbf 0$.\footnote{For Markov discretizations with drift, e.g., biased random walks or upwind constructions while preserving column-stochasticity, see~\cite{zheng2026system}.} On a voxel grid with spacing $\Delta x$, the per-step diffusion probability is defined as
\begin{equation}
    p_{\mathrm{diff}} \triangleq \frac{D\,\Delta t}{(\Delta x)^2}.
    \label{eq:pd}
\end{equation}
Binding is possible only from free voxels adjacent to the reactive boundary; we denote this receiver-adjacent set by $\mathcal{V}_{\mathrm{RX}}\subseteq\mathcal{S}_{\mathrm{free}}$. Over one sampling interval, the per-step binding probability is modeled as
\begin{equation}
    p_{\mathrm{bind}} \triangleq k_{\mathrm{on}}\, c_{\mathrm{p}}\, \Delta t,
    \label{eq:p_bind_general}
\end{equation}
and the per-step unbinding probability is
\begin{equation}
    p_{\mathrm{unbind}} \triangleq k_{\mathrm{off}}\, \Delta t.
    \label{eq:p_unbind_general}
\end{equation}
These expressions follow from a first-order small-step discretization of the reaction terms in~\cref{eq:robin_bc,eq:cb_ode}. The step size $\Delta t$ is chosen sufficiently small so that all resulting one-step transition probabilities remain in $[0,1]$.

For a free voxel $s_j$, denote its neighboring free-voxel set by $\mathcal{N}(s_j)\subseteq\mathcal S_{\mathrm{free}}$. Then, for $s_i,s_j\in\mathcal S_{\mathrm{free}}$, the diffusion block entries are
\begin{equation}
    Q_{i,j}=
    \begin{cases}
    p_{\mathrm{diff}}, & s_i\in\mathcal{N}(s_j),\\[2pt]
    1-|\mathcal{N}(s_j)|\,p_{\mathrm{diff}}, 
    & i=j,\ s_j\notin\mathcal V_{\mathrm{RX}},\\[2pt]
    1-|\mathcal{N}(s_j)|\,p_{\mathrm{diff}}-p_{\mathrm{bind}}, 
    & i=j,\ s_j\in\mathcal V_{\mathrm{RX}},\\[2pt]
    0, & \text{otherwise}.
    \end{cases}
    \label{eq:qij_general}
\end{equation}
Reflecting boundaries reduce $|\mathcal{N}(s_j)|$ and therefore increase the self-transition probability. To ensure nonnegative probabilities, $\Delta t$ is chosen such that
$|\mathcal{N}(s_j)|\,p_{\mathrm{diff}}\le 1$ for $s_j\notin\mathcal V_{\mathrm{RX}}$, and
$|\mathcal{N}(s_j)|\,p_{\mathrm{diff}}+p_{\mathrm{bind}}\le 1$ for $s_j\in\mathcal V_{\mathrm{RX}}$.
\Cref{fig:Q_neighborhood} illustrates this neighborhood construction in 2D.

If multiple bound states are used to represent different reactive subregions, binding transitions from a free voxel $s_j\in\mathcal{V}_{\mathrm{RX}}$ are distributed among bound states via nonnegative splitting coefficients $\{\alpha_{b,j}\}$ satisfying $\sum_{b=1}^{N_{\mathrm b}}\alpha_{b,j}=1$. Accordingly,
\begin{equation}
    B_{b,j}=
    \begin{cases}
    p_{\mathrm{bind}}\,\alpha_{b,j}, & s_j \in \mathcal{V}_{\mathrm{RX}},\\[2pt]
    0, & \text{otherwise}.
    \end{cases}
    \label{eq:B_general}
\end{equation}
Similarly, upon unbinding from bound state $s_{N_{\mathrm f}+b}$, a molecule is released into adjacent free voxels according to coefficients $\{\beta_{i,b}\}$ satisfying
$\sum_{i:\,s_i\in\mathcal{V}_{\mathrm{RX}}}\beta_{i,b}=1$, i.e.,
\begin{equation}
    U_{i,b}=
    \begin{cases}
    p_{\mathrm{unbind}}\,\beta_{i,b}, & s_i \in \mathcal{V}_{\mathrm{RX}},\\[2pt]
    0, & \text{otherwise}.
    \end{cases}
    \label{eq:U_general}
\end{equation}
Unless stated otherwise, we use uniform splitting, i.e., $\alpha_{b,j}=1/N_{\mathrm b}$ for $s_j\in\mathcal V_{\mathrm{RX}}$ and $\beta_{i,b}=1/|\mathcal V_{\mathrm{RX}}|$ for $s_i\in\mathcal V_{\mathrm{RX}}$.

We neglect lateral transitions among bound states, so that a bound molecule either remains bound or unbinds. Hence,
\begin{equation}
    \mathbf{R}=\mathrm{diag}(1-p_{\mathrm{unbind}},\,\ldots,\,1-p_{\mathrm{unbind}}).
    \label{eq:R_diag}
\end{equation}
Together, \cref{eq:qij_general,eq:B_general,eq:U_general,eq:R_diag} ensure that $\mathbf P$ is column-stochastic.

\subsection{State Evolution and Observation Mapping}
\label{subsec:state_evolution_observation}

Given the transition matrix $\mathbf P$, we track a single information molecule by its state-occupancy probabilities over $\mathcal S$. Specifically, the probability that the molecule is in state $s_i$ at time $t=k\Delta t$ is denoted by $x_i[k]$. Collecting these probabilities gives
\begin{equation}
    \bm{x}[k] \triangleq [x_{1}[k],\ldots,x_{N}[k]]^{\top}\in\mathbb{R}^{N},
    \label{eq:x_discrete}
\end{equation}
where $x_{i}[k]\triangleq \Pr\{R[k]=s_i\}$. By construction, $\bm x[k]\succeq \bm 0$ and $\|\bm{x}[k]\|_{1}=1$. Since $\mathbf{P}$ is column-stochastic, the state distribution evolves as
\begin{equation}
    \bm{x}[k+1]=\mathbf{P}\,\bm{x}[k],
    \label{eq:markov_update_disc_general}
\end{equation}
which yields
\begin{equation}
    \bm{x}[k]=\mathbf{P}^{k}\bm{x}[0]
    \label{eq:markov_solution_disc_general}
\end{equation}
for an arbitrary initial distribution $\bm{x}[0]$.

We define the set of observable states as $\mathcal{S}_{\mathrm{obs}}\subseteq\mathcal{S}$, and the corresponding observable-state indicator vector $\boldsymbol{o}\in\{0,1\}^{N}$ by
\begin{equation}
    o_i=
    \begin{cases}
    1, & s_i\in\mathcal{S}_{\mathrm{obs}},\\
    0, & \text{otherwise}.
    \end{cases}
    \label{eq:obs_vector_general}
\end{equation}
The resulting discrete-time observation is
\begin{equation}
    y[k] \triangleq \boldsymbol{o}^{\top}\bm{x}[k]
    = \Pr\{R[k]\in\mathcal S_{\mathrm{obs}}\},
    \label{eq:observation_mapping_discrete}
\end{equation}
where the equality follows directly from~\cref{eq:x_discrete,eq:obs_vector_general}. For microarray-style reception, a natural choice is $\mathcal S_{\mathrm{obs}}=\mathcal S_{\mathrm{bound}}$, such that the observation corresponds to the bound-state population.

For notational convenience, we associate a continuous-time trajectory $\bm{x}(t)$ with the sampled process through $\bm{x}(k\Delta t)=\bm{x}[k]$. When $\Delta t$ is sufficiently small, we use the following continuous-time surrogate as a first-order approximation,
\begin{equation}
    \mathbf{A}\triangleq\frac{\mathbf{P}-\mathbf{I}}{\Delta t},
    \label{eq:A_def}
\end{equation}
and write
\begin{equation}
    \frac{d\bm{x}(t)}{dt}\approx \mathbf{A}\,\bm{x}(t), \qquad 
    \bm{x}(t)\approx \exp(\mathbf{A}t)\,\bm{x}(0),
    \label{eq:cts_approx}
\end{equation}
with the corresponding observation $y(t)\triangleq \boldsymbol{o}^{\top}\bm{x}(t)$.

\subsection{Channel Response and Key Characteristics}
\label{subsec:channel_characteristics}

Based on~\cref{eq:markov_solution_disc_general,eq:observation_mapping_discrete}, we characterize the channel through its CIR, equilibrium gain, and effective channel memory.

We first consider a point release at the TX-associated state $s_{i^{\mathrm{TX}}}$. We define $\boldsymbol e_i$ as the $i$-th canonical basis vector. The discrete-time CIR is then
\begin{equation}
    h[n] \triangleq y[n]\big|_{\bm x[0]=\boldsymbol e_{i^{\mathrm{TX}}}}
    = \boldsymbol o^{\top}\mathbf{P}^{n}\boldsymbol e_{i^{\mathrm{TX}}},
    \label{eq:cir_def_disc}
\end{equation}
and the corresponding continuous-time surrogate is
\begin{equation}
    h(t) \triangleq y(t)\big|_{\bm x(0)=\boldsymbol e_{i^{\mathrm{TX}}}}
    \approx \boldsymbol o^{\top}\exp(\mathbf{A}t)\,\boldsymbol e_{i^{\mathrm{TX}}}.
    \label{eq:cir_def_cts}
\end{equation}

Under the non-saturating probe assumption introduced earlier, if the transmitter releases $Q$ independent and identically distributed (i.i.d.)\ molecules with the same initial distribution, the ensemble-mean observation scales linearly as $Q\,y[n]$. In particular, under a point-release impulse, the ensemble-mean response equals $Q\,h[n]$.

A stationary distribution $\bm x^{\mathrm{eq}}$ satisfies
\begin{equation}
    \bm{x}^{\mathrm{eq}}=\mathbf{P}\,\bm{x}^{\mathrm{eq}}, \qquad \|\bm{x}^{\mathrm{eq}}\|_1=1.
    \label{eq:xeq_def}
\end{equation}
Under standard ergodicity conditions, the stationary distribution is unique and $\bm x[n]\to \bm x^{\mathrm{eq}}$ as $n\to\infty$ for any initial distribution~\cite{levin2017markov}. Accordingly, the equilibrium gain is defined as
\begin{equation}
    h^{\mathrm{eq}} \triangleq \lim_{n\to\infty} h[n] = \boldsymbol o^{\top}\bm{x}^{\mathrm{eq}}.
    \label{eq:heq_def}
\end{equation}

The effective channel memory is governed by the rate at which the state distribution approaches stationarity. For an ergodic finite-state Markov chain, the dominant transient is associated with the second-largest eigenvalue modulus (SLEM) $|\lambda_1|$ of $\mathbf P$~\cite{levin2017markov}. Following the relaxation-time interpretation in~\cite[Ch.~12.2]{levin2017markov}, we define the characteristic time constant
\begin{equation}
    \tau \triangleq \frac{\Delta t}{1-|\lambda_1|},
    \label{eq:tau_def}
\end{equation}
and the corresponding settling time
\begin{equation}
    t^{\mathrm{eq}} \triangleq 5\,\tau,
    \label{eq:teq_def}
\end{equation}
which provides a practical measure of the effective channel memory.

\begin{proposition}[Exponential convergence with characteristic time]
\label{prop:mixing_rate}
Under the continuous-time surrogate in~\cref{eq:cts_approx}, there exists a constant $C>0$ such that
\begin{equation}
    \|\bm{x}(t)-\bm{x}^{\mathrm{eq}}\|_\infty
    \le C\,\exp\left(-\frac{t}{\tau}\right).
    \label{eq:exp_decay_continuous_tau}
\end{equation}
\end{proposition}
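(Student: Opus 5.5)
The plan is to work directly with the linear ODE surrogate in \cref{eq:cts_approx} and reduce the claim to a spectral bound on $\mathbf{A}$ restricted to the subspace orthogonal to the all-ones left eigenvector.

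\emph{Step 1 (error dynamics).} Since $\mathbf{P}\bm x^{\mathrm{eq}}=\bm x^{\mathrm{eq}}$ by \cref{eq:xeq_def}, we have $\mathbf{A}\bm x^{\mathrm{eq}}=\bm 0$ and hence $\exp(\mathbf{A}t)\bm x^{\mathrm{eq}}=\bm x^{\mathrm{eq}}$. The error $\bm e(t)\triangleq\bm x(t)-\bm x^{\mathrm{eq}}$ therefore satisfies
\[
\bm e(t)=\exp(\mathbf{A}t)\,\bm e(0),
\]
where $\bm e(0)=\bm x(0)-\bm x^{\mathrm{eq}}$.

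\emph{Step 2 (invariant subspace).} Column-stochasticity gives $\bm 1^\top\mathbf{P}=\bm 1^\top$, so $\bm 1^\top\mathbf{A}=\bm 0^\top$. Both $\bm x(0)$ and $\bm x^{\mathrm{eq}}$ have unit $\ell_1$-mass, so $\bm 1^\top\bm e(0)=0$. The subspace $\mathcal W\triangleq\{\bm w:\bm 1^\top\bm w=0\}$ is invariant under $\mathbf{P}$, and hence under $\mathbf{A}$. Under the standard ergodicity conditions invoked before \cref{eq:heq_def}, the eigenvalue $1$ of $\mathbf P$ is simple. Consequently $\mathcal W$ is exactly the sum of the generalized eigenspaces of the remaining eigenvalues $\lambda_1,\ldots,\lambda_{N-1}$, all of which satisfy $|\lambda_i|\le|\lambda_1|<1$.

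\emph{Step 3 (eigenvalues of $\mathbf A$).} The eigenvalues of $\mathbf{A}$ on $\mathcal W$ are $\mu_i=(\lambda_i-1)/\Delta t$. These satisfy
\[
\mathrm{Re}\,\mu_i=\frac{\mathrm{Re}\,\lambda_i-1}{\Delta t}\le\frac{|\lambda_i|-1}{\Delta t}\le-\frac{1-|\lambda_1|}{\Delta t}=-\frac1\tau ,
\]
where the last equality is the definition \cref{eq:tau_def}.

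\emph{Step 4 (bound).} Write the Jordan decomposition $\mathbf{A}|_{\mathcal W}=\mathbf{V}\mathbf{J}\mathbf{V}^{-1}$. Then
\[
\|\bm e(t)\|_\infty\le\|\mathbf V\|_\infty\|\mathbf V^{-1}\|_\infty\,\|\exp(\mathbf J t)\|_\infty\,\|\bm e(0)\|_\infty .
\]
If every $\mu_i$ attaining $\mathrm{Re}\,\mu_i=-1/\tau$ is semisimple, then $\|\exp(\mathbf Jt)\|_\infty\le c\,e^{-t/\tau}$. Setting $C=c\,\|\mathbf V\|_\infty\|\mathbf V^{-1}\|_\infty\,\|\bm e(0)\|_\infty$ finishes the proof. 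A uniform choice also works: $C=2c\,\|\mathbf V\|_\infty\|\mathbf V^{-1}\|_\infty$, since $\|\bm e(0)\|_\infty\le 2$.

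\emph{Main obstacle.} The delicate point is Step 4: a nontrivial Jordan block at the extremal real part would introduce a polynomial factor $t^m$. That factor destroys the exact rate $1/\tau$ and leaves only $1/\tau-\epsilon$ for every $\epsilon>0$.

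I would handle this in two ways.

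First, equality in both inequalities of Step 3 forces $\mathrm{Re}\,\lambda_i=|\lambda_i|=|\lambda_1|$, i.e.\ $\lambda_i=|\lambda_1|$ real positive. Every other eigenvalue has real part strictly less than $-1/\tau$, so its polynomial factor is absorbed into the constant. The issue thus reduces to the single real eigenvalue $|\lambda_1|$.

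Second, I would show that this eigenvalue is semisimple, ideally by proving that $\mathbf P$ is diagonalizable. The natural route is reversibility. The diffusion block $\mathbf Q$ is symmetric, and the binding/unbinding pairs $(\mathbf B,\mathbf U)$ with uniform splitting satisfy a detailed-balance relation with respect to $\bm x^{\mathrm{eq}}$. Given detailed balance, $\mathrm{diag}(\bm x^{\mathrm{eq}})^{-1/2}\mathbf P\,\mathrm{diag}(\bm x^{\mathrm{eq}})^{1/2}$ is symmetric, so all eigenvalues are real and semisimple.

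If reversibility fails in the general block setting, I would state diagonalizability (or semisimplicity of the SLEM eigenvalue) as an explicit standing assumption under which the proposition holds.
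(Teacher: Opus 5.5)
Your proposal is correct and follows the same spectral route as the paper. The paper assumes $\mathbf P$ is diagonalizable, maps each $\lambda_i$ to $(\lambda_i-1)/\Delta t$, and bounds every non-stationary mode via $\Re\{\lambda_i\}\le|\lambda_i|\le|\lambda_1|$, with constant $\kappa_\infty(\mathbf V)\|\bm x(0)\|_\infty$. Your handling of the non-diagonalizable case is sharper than the paper's footnote, which asserts that Jordan polynomial factors are harmless: you correctly note that only a nontrivial Jordan block at the real eigenvalue $|\lambda_1|$ could spoil a bound with the exact rate $1/\tau$, and that detailed balance excludes this (it holds for the 1D birth--death chain and for uniform splitting).
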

\begin{proof}
See Appendix~\cref{app:proof_prop_mixing}.
\end{proof}

Since $h[n]$ converges exponentially fast to $h^{\mathrm{eq}}$, the channel can also be approximated by a finite memory in practice, where the required memory length depends on a prescribed tolerance and the symbol interval. This viewpoint will be useful for future finite-memory convolutional models and sequence-detection designs.
\begin{figure*}[t]
  \centering

  \subfigure[\label{fig:overview:a}Markov-chain microarray model.]{%
    \centering
    \includegraphics[width=0.98\linewidth]{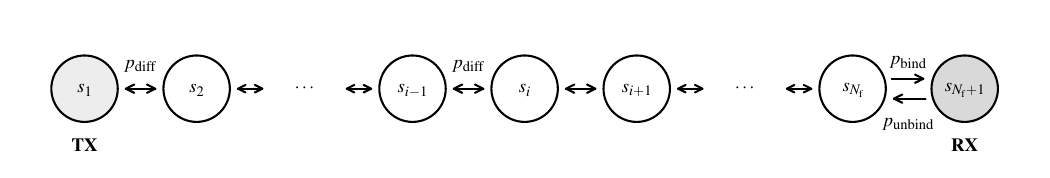}
  }\\[-0.2em]

  \subfigure[\label{fig:overview:b}State evolution heatmap $x_i(t)$.]{%
    \centering
    \includegraphics[width=0.32\linewidth]{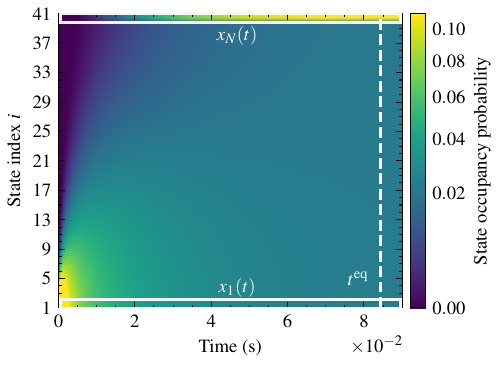}
  }\hfill
  \subfigure[\label{fig:overview:c}Observable response $h(t)=x_{N}(t)$.]{%
    \centering
    \includegraphics[width=0.32\linewidth]{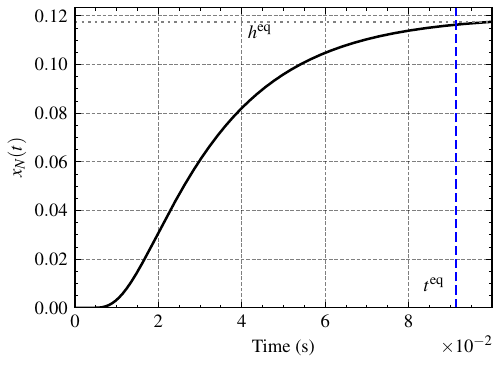}
  }\hfill
  \subfigure[\label{fig:overview:d}Example hidden-state trajectory $x_1(t)$.]{%
    \centering
    \includegraphics[width=0.32\linewidth]{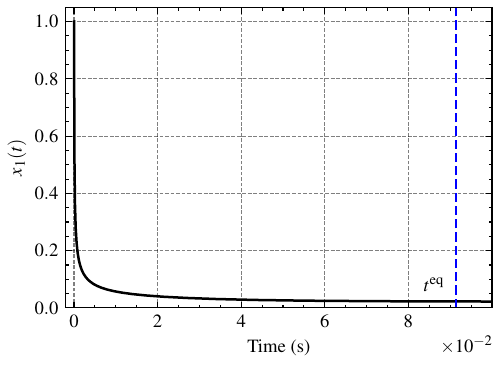}
  }

  \caption{Overview of the proposed Markov-based microarray channel model. (a) A 1D Markov chain with nearest-neighbor diffusion among the free states and reversible binding/unbinding between the receiver-adjacent free state $s_{N_{\mathrm f}}$ and the bound state $s_N$. (b) State occupancy probabilities $x_i(t)$ under a point release at the transmitter, illustrating transient propagation, binding, and convergence toward equilibrium; the highlighted rows correspond to the trajectories shown in (c) and (d). (c) Observable channel response $h(t)=x_N(t)$, with equilibrium gain $h^{\mathrm{eq}}$ and settling time $t^{\mathrm{eq}}$. (d) Example hidden-state trajectory $x_1(t)$, illustrating the evolution of an internal, non-observable state. Illustrative parameters are used here for visual clarity and may differ from those adopted in the numerical results.}
  \label{fig:system_overview}
  \vspace{-0.4cm}
\end{figure*}
\section{Microarray Channel Specialization and Analytical Characterization}
\label{sec:channel_specialization}

In this section, we specialize the general Markov state-space framework in~\cref{sec:general_framework} to the considered DNA microarray RX. The closed-form analytical characterization developed here is specific to the 1D microarray channel. Motivated by commonly adopted confined-chamber microarray and microfluidic biosensing abstractions~\cite{pappaert2003diffusion,squires2005microfluidics,squires2008making}, we employ an effective 1D axial model, which captures the dominant transport path from TX to the reactive RX while averaging out transverse mixing. This specialization enables explicit characterization of the CIR, the equilibrium gain, and the effective channel memory.

\subsection{Microarray Channel Specialization}
\label{subsec:microarray_channel_specialization}

We specialize the general construction in~\cref{subsec:transition_matrix} to a 1D microarray geometry with pure diffusion, i.e., $\mathbf v=\mathbf 0$ in~\cref{eq:pde_cd}. The axial channel is discretized into $N_{\mathrm f}$ free states, and the reactive receiver surface is represented by a single bound state. Hence, the state space becomes
\begin{equation}
    \mathcal{S}=\{s_1,\ldots,s_{N_{\mathrm f}},\, s_{N_{\mathrm f}+1}\}.
    \label{eq:state_space_1d}
\end{equation}
For compact notation, we define the total number of states as
\begin{equation}
    N \triangleq N_{\mathrm f}+1,
    \label{eq:N_def_microarray}
\end{equation}
so that the unique RX-associated bound state is $s_N$.

Since the reactive boundary is adjacent to the terminal free state, the receiver-adjacent set reduces to
\begin{equation}
    \mathcal{V}_{\mathrm{RX}}=\{s_{N_{\mathrm f}}\}.
    \label{eq:VRX_1d}
\end{equation}
Without loss of generality, we place the TX point release at the first free state, i.e., $i^{\mathrm{TX}}=1$. Moreover, the observable set consists of the single bound state $s_N$, and thus the observation vector introduced in~\cref{sec:general_framework} reduces to
\begin{equation}
    \boldsymbol o = \boldsymbol e_N,
    \label{eq:o_microarray}
\end{equation}
where $\boldsymbol e_i$ denotes the $i$-th canonical basis vector. Likewise, the TX initial vector is
\begin{equation}
    \bm x_{\mathrm{TX}} \triangleq \boldsymbol e_1.
    \label{eq:tx_vector_microarray}
\end{equation}
Accordingly, the single-molecule observation is $y[k]=\boldsymbol o^{\top}\bm x[k]$.

\Cref{fig:overview:a} illustrates the resulting 1D Markov chain: a molecule performs nearest-neighbor diffusion among the free states $\{s_1,\ldots,s_{N_{\mathrm f}}\}$ with probability $p_{\mathrm{diff}}$ per step, and the terminal free state $s_{N_{\mathrm f}}$ couples to the bound state $s_N$ via binding and unbinding with probabilities $p_{\mathrm{bind}}$ and $p_{\mathrm{unbind}}$, respectively.

With the state ordering $(s_1,\ldots,s_{N_{\mathrm f}}, s_N)$, the transition matrix $\mathbf P$ retains the block form in~\cref{eq:P_blocks} with a single bound state. Using \cref{eq:pd,eq:p_bind_general,eq:p_unbind_general} and $\mathcal V_{\mathrm{RX}}=\{s_{N_{\mathrm f}}\}$, the free-to-free block is
\begin{equation}
    \mathbf{Q}=
    \begin{bmatrix}
    1-p_{\mathrm{diff}}     & p_{\mathrm{diff}}        &        & 0 \\
    p_{\mathrm{diff}}       & 1-2p_{\mathrm{diff}}     & \ddots & \vdots \\
                            & \ddots                   & \ddots & p_{\mathrm{diff}} \\
    0                       & \cdots                   & p_{\mathrm{diff}} & 1-p_{\mathrm{diff}}-p_{\mathrm{bind}}
    \end{bmatrix},
    \label{eq:Q_1d}
\end{equation}
where the last diagonal entry reflects the competition between diffusion and binding at $s_{N_{\mathrm f}}$. Since there is only one bound state, the remaining blocks reduce to
\begin{equation}
    \left\{
    \begin{aligned}
    \mathbf{B} &= [\,0,\ldots,0,\,p_{\mathrm{bind}}\,],\\
    \mathbf{U} &= [\,0,\ldots,0,\,p_{\mathrm{unbind}}\,]^{\top},\\
    \mathbf{R} &= [\,1-p_{\mathrm{unbind}}\,].
    \end{aligned}
    \right.
    \label{eq:BUR_1d}
\end{equation}

Substituting $\boldsymbol o=\boldsymbol e_N$ and $\bm x[0]=\bm x_{\mathrm{TX}}$ into~\cref{eq:cir_def_disc,eq:cir_def_cts}, the microarray CIR becomes
\begin{equation}
    h[n]=\boldsymbol e_{N}^{\mathsf T}\mathbf P^{n}\boldsymbol e_{1},
    \qquad
    h(t)\approx \boldsymbol e_{N}^{\mathsf T}\exp(\mathbf A t)\,\boldsymbol e_{1},
    \label{eq:cir_1d_explicit}
\end{equation}
where $\mathbf A=(\mathbf P-\mathbf I)/\Delta t$ and $h[n]=h(n\Delta t)$.

\subsection{Equilibrium and Settling Characteristics}
\label{subsec:microarray_eq_settling}

We now characterize the equilibrium behavior and the effective channel memory of the specialized microarray chain. For $0<p_{\mathrm{diff}}<\frac{1}{2}$ and $0<p_{\mathrm{bind}},p_{\mathrm{unbind}}<1$, the chain is irreducible and aperiodic, and is therefore ergodic. This convergence behavior is illustrated by the state-probability heatmap in~\cref{fig:overview:b}, where $\{x_i(t)\}$ evolves from the point-release initial condition and approaches the stationary distribution. Hence, there exists a unique stationary distribution $\bm x^{\mathrm{eq}}$ satisfying
\begin{equation}
    \bm x^{\mathrm{eq}} = \mathbf P\,\bm x^{\mathrm{eq}}, \qquad
    \|\bm x^{\mathrm{eq}}\|_1=1 .
    \label{eq:xeq_microarray_def}
\end{equation}

\begin{corollary}[Equilibrium Distribution and Gain]
\label{prop:xeq_heq_microarray}
For the 1D microarray specialization in~\cref{subsec:microarray_channel_specialization}, the stationary distribution is
\begin{equation}
    x_i^{\mathrm{eq}}=
    \begin{cases}
    \displaystyle \frac{K_{\mathrm D}}{N_{\mathrm f}K_{\mathrm D}+c_{\mathrm p}},
    & 1\le i\le N_{\mathrm f},\\[2mm]
    \displaystyle \frac{c_{\mathrm p}}{N_{\mathrm f}K_{\mathrm D}+c_{\mathrm p}},
    & i=N_{\mathrm f}+1,
    \end{cases}
    \label{eq:xeq_microarray_closedform}
\end{equation}
and the corresponding equilibrium channel gain equals
\begin{equation}
    h^{\mathrm{eq}}
    =\boldsymbol{o}^{\mathsf T}\bm x^{\mathrm{eq}}
    =x_{N_{\mathrm f}+1}^{\mathrm{eq}}
    =\frac{c_{\mathrm p}}{N_{\mathrm f}K_{\mathrm D}+c_{\mathrm p}} ,
    \label{eq:heq_microarray_closedform}
\end{equation}
where $K_{\mathrm D}\triangleq k_{\mathrm{off}}/k_{\mathrm{on}}$ is the dissociation constant.
\end{corollary}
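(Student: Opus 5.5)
Since the chain is ergodic (irreducible and aperiodic for the stated parameter ranges), the stationary distribution in~\cref{eq:xeq_microarray_def} is unique. It therefore suffices to exhibit one probability vector satisfying $\bm x=\mathbf P\bm x$ and check that it equals~\cref{eq:xeq_microarray_closedform}. The quickest route is detailed balance. The 1D chain in~\cref{eq:Q_1d,eq:BUR_1d} is a birth--death chain on the path $s_1,\ldots,s_{N_{\mathrm f}},s_N$, so it is reversible. Its stationary distribution is therefore characterized by the local balance equations
\begin{equation*}
    p_{\mathrm{diff}}\,x_i = p_{\mathrm{diff}}\,x_{i+1},\quad 1\le i\le N_{\mathrm f}-1,
    \qquad
    p_{\mathrm{bind}}\,x_{N_{\mathrm f}} = p_{\mathrm{unbind}}\,x_N .
\end{equation*}

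\textbf{Step 1.} From the first set of equations, all free states carry equal mass, say $x_i=a$ for $1\le i\le N_{\mathrm f}$. I will verify directly that this is compatible with $\mathbf P\bm x=\bm x$ row by row. For an interior row $i$, the diagonal term $(1-2p_{\mathrm{diff}})a$ plus the two inflows $p_{\mathrm{diff}}a$ gives $a$. Row $1$ works the same way, with the reflecting diagonal $1-p_{\mathrm{diff}}$ and a single inflow.

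\textbf{Step 2.} The two rows involving the receiver need care. Row $N_{\mathrm f}$ reads
\begin{equation*}
    p_{\mathrm{diff}}a+(1-p_{\mathrm{diff}}-p_{\mathrm{bind}})a+p_{\mathrm{unbind}}x_N=a,
\end{equation*}
which forces $p_{\mathrm{unbind}}x_N=p_{\mathrm{bind}}a$. Row $N$ reads $p_{\mathrm{bind}}a+(1-p_{\mathrm{unbind}})x_N=x_N$, which gives the same relation. Hence
\begin{equation*}
    x_N=\frac{p_{\mathrm{bind}}}{p_{\mathrm{unbind}}}\,a .
\end{equation*}
Substituting~\cref{eq:p_bind_general,eq:p_unbind_general}, the factor $\Delta t$ cancels, so
\begin{equation*}
    \frac{p_{\mathrm{bind}}}{p_{\mathrm{unbind}}}=\frac{k_{\mathrm{on}}c_{\mathrm p}}{k_{\mathrm{off}}}=\frac{c_{\mathrm p}}{K_{\mathrm D}} .
\end{equation*}

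\textbf{Step 3.} Normalize using $N_{\mathrm f}a+\frac{c_{\mathrm p}}{K_{\mathrm D}}a=1$, which gives $a=K_{\mathrm D}/(N_{\mathrm f}K_{\mathrm D}+c_{\mathrm p})$ and hence~\cref{eq:xeq_microarray_closedform}. Finally,~\cref{eq:heq_microarray_closedform} follows from $\boldsymbol o=\boldsymbol e_N$ in~\cref{eq:o_microarray} together with~\cref{eq:heq_def}.

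The only delicate point is the bookkeeping in the boundary rows $N_{\mathrm f}$ and $N$, where the diagonal term carries the extra $-p_{\mathrm{bind}}$. Beyond that, the argument relies on ergodicity for uniqueness, which the paper has already asserted, so no real obstacle remains.
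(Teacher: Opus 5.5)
Your proof is correct and follows essentially the same route as the paper: equal free-state masses, the interface balance $p_{\mathrm{bind}}x_{N_{\mathrm f}}=p_{\mathrm{unbind}}x_N$ (with $\Delta t$ cancelling to give $c_{\mathrm p}/K_{\mathrm D}$), and normalization. Your row-by-row check of $\mathbf P\bm x=\bm x$ and your appeal to uniqueness under ergodicity spell out what the paper's short proof only asserts.
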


\begin{proof}
For the homogeneous 1D diffusion chain, the free-state equilibrium probabilities are identical. Together with the interface balance
\[
x_{N_{\mathrm f}}^{\mathrm{eq}}p_{\mathrm{bind}}=x_N^{\mathrm{eq}}p_{\mathrm{unbind}}
\]
and the normalization condition $\sum_{i=1}^{N}x_i^{\mathrm{eq}}=1$, this yields~\cref{eq:xeq_microarray_closedform,eq:heq_microarray_closedform}.
\end{proof}

The closed form in~\cref{eq:heq_microarray_closedform} shows that the equilibrium gain is determined by the dissociation constant $K_{\mathrm D}$ and the probe density $c_{\mathrm p}$. Since the observable state is $s_N$, the continuous-time surrogate response satisfies $h(t)=x_N(t)$, and \cref{fig:overview:c} illustrates how $h(t)$ approaches the equilibrium gain $h^{\mathrm{eq}}$ under a point release. This dependence is further illustrated in~\cref{fig:eq_settling:a}: varying $K_{\mathrm D}$ changes the equilibrium gain, while the settling behavior remains relatively similar when $k_{\mathrm{off}}$ is fixed.

With the characteristic time $\tau$ and the settling time $t^{\mathrm{eq}}$ defined in~\cref{eq:tau_def,eq:teq_def}, the effective channel memory is governed by the slowest transient mode of the Markov chain. In particular, a larger SLEM $|\lambda_1|$ implies slower relaxation, a larger settling time, and hence a longer effective channel memory. This trend is illustrated in~\cref{fig:eq_settling:b}: decreasing $k_{\mathrm{off}}$ slows the convergence toward equilibrium and increases $t^{\mathrm{eq}}$, while the equilibrium gain remains unchanged when $K_{\mathrm D}$ is fixed. A similar relaxation behavior is also observed in hidden free-state trajectories, as illustrated by $x_1(t)$ in~\cref{fig:overview:d}.

\begin{remark}[Design implication]
\label{rem:design_Tb_teq}
The ratio $T_b/t^{\mathrm{eq}}$ provides a convenient indicator of ISI severity. When $T_b \gtrsim t^{\mathrm{eq}}$, the channel is effectively weak-memory, and the residual ISI is small. In contrast, when $T_b \lesssim t^{\mathrm{eq}}$, the channel exhibits strong memory, and history-dependent detection becomes beneficial.
\end{remark}

\begin{figure}[t]
    \centering
    \subfigure[\label{fig:eq_settling:a}Impact of $K_{\mathrm D}$ on the equilibrium gain $h^{\mathrm{eq}}$ ($k_{\mathrm{off}}$ fixed).]{%
        \centering
        \includegraphics[width=0.47\linewidth]{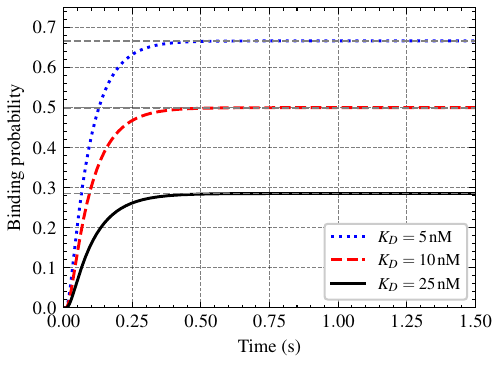}}
    \hfill
    \subfigure[\label{fig:eq_settling:b}Impact of $k_{\mathrm{off}}$ on the settling behavior ($K_{\mathrm D}$ fixed).]{%
        \centering
        \includegraphics[width=0.47\linewidth]{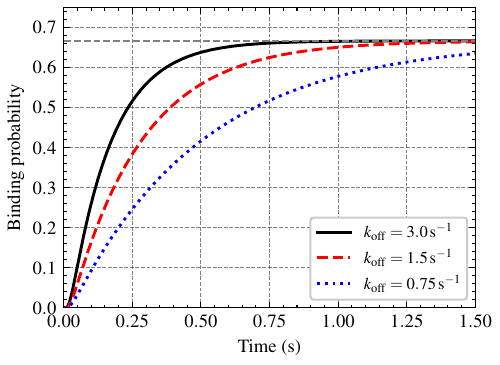}}
    \caption{Bound-state response $h(t)$ under the point-release initial condition $\bm x[0]=\boldsymbol e_1$. (a) Varying the dissociation constant $K_{\mathrm D}$ changes the equilibrium gain $h^{\mathrm{eq}}$ predicted by~\cref{eq:heq_microarray_closedform}, while the settling time $t^{\mathrm{eq}}$ remains relatively similar when $k_{\mathrm{off}}$ is fixed. (b) Varying the unbinding rate $k_{\mathrm{off}}$ mainly changes the settling time $t^{\mathrm{eq}}$, and hence the effective channel memory, while all curves share the same equilibrium gain $h^{\mathrm{eq}}$ when $K_{\mathrm D}$ is fixed.}
    \label{fig:eq_settling}
    \vspace{-0.4cm}
\end{figure}

\section{Symbol-Rate Communication Model and Noise Statistics}
\label{sec:io_noise}

In this section, we extend the single-molecule Markov channel in~\cref{sec:channel_specialization} to a symbol-rate communication model for detection. We first introduce on--off keying (OOK) and derive the symbol-rate mean response in convolutional form. We then characterize the associated \emph{counting noise}, with particular emphasis on its symbol-rate covariance. Hence, this section captures both the mean-domain memory of the channel through ISI and the noise-domain memory through temporally correlated counting noise.

\subsection{Transmitter Modulation and Signaling}
\label{subsec:tx_modulation}

We employ concentration-based binary OOK. At the beginning of a symbol interval $k$, i.e., at fine-time index $n=kN_s$, the TX releases a number of $Q$ molecules to convey $a_k=1$ and releases none to convey $a_k=0$~\cite{zheng2020noise,zheng2025anis}. The symbol interval is aligned with the Markov sampling grid through
\begin{equation}
    T_b \triangleq N_s\Delta t, \qquad N_s\in\mathbb N.
    \label{eq:Tb_alignment}
\end{equation}
At the $\Delta t$ resolution, the release sequence is represented by the impulse train
\begin{equation}
    u[n] \triangleq \sum_{k=0}^{\infty} a_k\,\delta[n-kN_s],
    \label{eq:uk_def}
\end{equation}
where $\delta[\cdot]$ denotes the Kronecker delta.

A point release produces the single-molecule CIR $h[n]$ on the $\Delta t$ grid; see~\cref{eq:cir_1d_explicit}. Under the non-saturating probe assumption introduced in~\cref{sec:general_framework}, the $Q$ released molecules evolve independently. By linear superposition, the conditional mean response at fine-time index $n$ is
\begin{equation}
    \bar z[n]
    = Q\,(u*h)[n]
    = Q\sum_{r=0}^{\lfloor n/N_s\rfloor} a_r\,h[n-rN_s],
    \label{eq:zk_bar_convolution}
\end{equation}
where $*$ denotes convolution.

Sampling~\cref{eq:zk_bar_convolution} at symbol boundaries yields the symbol-rate mean sequence
\begin{equation}
    \bar z_k \triangleq \bar z[kN_s], \qquad k=0,1,2,\ldots .
    \label{eq:zk_bar_symbol_def}
\end{equation}
Likewise, we define the symbol-spaced CIR taps as
\begin{equation}
    h_\ell \triangleq h[\ell N_s], \qquad \ell=0,1,2,\ldots .
    \label{eq:h_symbol_taps}
\end{equation}
Substituting~\cref{eq:h_symbol_taps} into the sampled version of~\cref{eq:zk_bar_convolution} gives the standard symbol-rate ISI model
\begin{equation}
    \bar z_k
    = Q\sum_{\ell=0}^{k} a_{k-\ell}\,h_{\ell}.
    \label{eq:zk_bar_symbol_isi}
\end{equation}
The symbol-rate observation therefore depends on the current symbol as well as past releases through the CIR taps $\{h_\ell\}$.


\subsection{Counting Noise Model and Statistics}
\label{subsec:noise_statistics}

The RX observation is the \emph{number of bound cDNA molecules} on the microarray surface. Let $z[n]\in\mathbb{Z}_{\ge 0}$ denote the fine-time bound-molecule count at time $t=n\Delta t$. Sampling at symbol boundaries gives the \emph{symbol-rate observation}
\begin{equation}
    z_k \triangleq z[kN_s].
    \label{eq:zk_def_noisy}
\end{equation}
Unless stated otherwise, all expectations, variances, and covariances in this section are conditioned on the transmit sequence $\{a_k\}$.

Under the non-saturating probe assumption, molecules evolve independently, and the total observation can be represented as the superposition of independent batch contributions. Consider the release triggered at symbol time $r$, which contains $Qa_r$ molecules. For any fine-time index $n\ge rN_s$, each molecule from this batch is bound with probability $h[n-rN_s]$. Hence,
\begin{equation}
    z^{(r)}[n]\sim \mathcal{B}\!\left(Qa_r,\; h[n-rN_s]\right),\qquad n\ge rN_s,
    \label{eq:binom_batch_dt}
\end{equation}
where $z^{(r)}[n]=0$ for $n<rN_s$. The total fine-time observation is then
\begin{equation}
    z[n]\triangleq \sum_{r=0}^{\lfloor n/N_s\rfloor} z^{(r)}[n].
    \label{eq:z_superposition_dt}
\end{equation}
Accordingly, for a fixed $n$, $z[n]$ follows a grouped-binomial model, or equivalently a Poisson--binomial distribution.

Taking expectation in~\cref{eq:z_superposition_dt} yields
\begin{equation}
    \E\!\left\{z[n]\right\}
    = \sum_{r=0}^{\lfloor n/N_s\rfloor} Qa_r\,h[n-rN_s]
    \triangleq \bar z[n],
    \label{eq:mean_z_dt}
\end{equation}
which is consistent with~\cref{eq:zk_bar_convolution}. Since different batches correspond to disjoint molecule sets and are therefore independent, the variance is
\begin{equation}
\begin{aligned}
    \Var\!\left(z[n]\right)
    &= \sum_{r=0}^{\lfloor n/N_s\rfloor} Qa_r\,h[n-rN_s]\bigl(1-h[n-rN_s]\bigr) \\
    &\triangleq \sigma_z^2[n].
\end{aligned}
\label{eq:var_z_dt}
\end{equation}

We define the \emph{counting noise} at fine-time index $n$ as
\begin{equation}
    w[n]\triangleq z[n]-\bar z[n].
    \label{eq:w_dt_def}
\end{equation}
By construction, $\E\{w[n]\}=0$ and $\Var(w[n])=\sigma_z^2[n]$.

We now return to symbol-rate sampling in~\cref{eq:zk_def_noisy}. With
\[
z_k \triangleq z[kN_s],\qquad w_k \triangleq w[kN_s],
\]
the symbol-rate observation satisfies
\begin{equation}
    z_k = \bar z_k + w_k,
    \label{eq:zk_decomp}
\end{equation}
where $\bar z_k$ is given by~\cref{eq:zk_bar_symbol_isi}. By sampling~\cref{eq:var_z_dt} at symbol boundaries, the symbol-rate variance becomes
\begin{equation}
    \sigma_k^2 \triangleq \Var\!\left(z_k\right)
    = \Var\!\left(w_k\right)
    = \sum_{\ell=0}^{k} Qa_{k-\ell}\,h_{\ell}\bigl(1-h_{\ell}\bigr).
    \label{eq:var_z_symbol}
\end{equation}

For sufficiently large molecule counts, the grouped-binomial observation may be approximated by a Gaussian model matched in mean and variance,
\begin{equation}
    z_k \approx \bar z_k + w_k,\qquad
    w_k \sim \mathcal{N}\!\left(0,\sigma_k^2\right).
    \label{eq:additive_noise_model}
\end{equation}
This Gaussian approximation is used only as a convenient RX-level approximation. The grouped-binomial model above, as well as the covariance result derived next, do not rely on Gaussianity.

Due to reversible binding and unbinding, the sequence $\{w_k\}$ is generally \emph{colored}, i.e.,
\[
\Cov(w_k,w_{k-\ell})\neq 0
\]
for some $\ell\ge 1$. Since $\bar z_k$ is deterministic conditioned on $\{a_k\}$, centering does not change covariance, and hence
\[
\Cov(w_k,w_{k-\ell})=\Cov(z_k,z_{k-\ell}).
\]
Therefore, the symbol-rate covariance of the counting noise fully characterizes the temporal dependence of the receiver uncertainty.

\begin{proposition}[Covariance of the counting noise]
\label{prop:cov_microarray}
Under the non-saturating probe assumption, consider the microarray RX with a single observable bound state $s_N$, i.e., $\boldsymbol o=\boldsymbol e_N$. Define the $\ell$-symbol return probability
\begin{equation}
    \pi_N^{(\ell)} \triangleq (\mathbf P^{\ell N_s})_{N,N}
    = \boldsymbol e_N^{\top}\mathbf P^{\ell N_s}\boldsymbol e_N,
    \qquad \ell=1,2,\ldots,
    \label{eq:pi_N_l_def}
\end{equation}
which is the probability that a molecule in $s_N$ at one symbol boundary is again in $s_N$ after $\ell$ symbol intervals. Then, conditioned on the transmit sequence $\{a_k\}$, for $k\ge \ell\ge 1$, the lag-$\ell$ covariance of the symbol-rate counting noise satisfies
\begin{equation}
    \Cov(w_k,w_{k-\ell})
    = Q\sum_{r=0}^{k-\ell} a_r\,h_{k-\ell-r}\Bigl(\pi_N^{(\ell)}-h_{k-r}\Bigr).
    \label{eq:cov_wk_wk_l_microarray}
\end{equation}
\end{proposition}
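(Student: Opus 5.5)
The proof decomposes $z_k$ and $z_{k-\ell}$ into single-molecule indicators and exploits independence across molecules together with the Markov property of each trajectory. Since $w_k$ and $z_k$ differ only by the deterministic mean $\bar z_k$, it suffices to compute $\Cov(z_k,z_{k-\ell})$. By \cref{eq:z_superposition_dt}, $z_k=\sum_{r=0}^{k} z^{(r)}[kN_s]$, where batch $r$ contributes only if $a_r=1$. Batch $r$ consists of $Qa_r$ molecules released at fine time $rN_s$ in state $s_1$. For molecule $m$ of batch $r$, write the indicator $I_{r,m}[n]\triangleq \mathbbm{1}\{R_{r,m}[n]=s_N\}$. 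Then
\[
z_k=\sum_{r=0}^{k}\sum_{m=1}^{Qa_r} I_{r,m}[kN_s].
\]

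First, I would use bilinearity of covariance and the non-saturating independence assumption. Indicators belonging to different molecules, whether in the same batch or in different batches, are independent, so every cross term vanishes. Only pairs formed by the same molecule at the two times $(k-\ell)N_s$ and $kN_s$ remain. Batches with $r>k-\ell$ contribute nothing at time $k-\ell$, since their indicators are zero there, so the sum restricts to $0\le r\le k-\ell$. The molecules within a batch are identically distributed, which gives
\[
\Cov(z_k,z_{k-\ell})=\sum_{r=0}^{k-\ell}Qa_r\,\Cov\bigl(I_r[kN_s],I_r[(k-\ell)N_s]\bigr).
\]

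Next comes the single-molecule covariance:
\[
\E\{I_r[kN_s]I_r[(k-\ell)N_s]\}-\E\{I_r[kN_s]\}\E\{I_r[(k-\ell)N_s]\}.
\]
The marginals follow from \cref{eq:cir_1d_explicit} and \cref{eq:h_symbol_taps}: they are $h_{k-r}$ and $h_{k-\ell-r}$, because the molecule starts in $s_1$ at time $rN_s$ and the chain is time-homogeneous. For the joint moment I would condition on $R_r[(k-\ell)N_s]=s_N$ and apply the Markov property together with $\Pr\{R[n+m]=s_i\mid R[n]=s_j\}=(\mathbf P^m)_{i,j}$, which follows from \cref{eq:one_step_transition} and \cref{eq:markov_solution_disc_general}. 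This yields
\[
\Pr\{R_r[kN_s]=s_N,\ R_r[(k-\ell)N_s]=s_N\}=h_{k-\ell-r}\,(\mathbf P^{\ell N_s})_{N,N}=h_{k-\ell-r}\,\pi_N^{(\ell)}.
\]
The single-molecule covariance is therefore $h_{k-\ell-r}\bigl(\pi_N^{(\ell)}-h_{k-r}\bigr)$. Summing over $r$ with weights $Qa_r$ gives \cref{eq:cov_wk_wk_l_microarray}.

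The main point needing care is the bookkeeping of the time indices and of the column-stochastic convention. Under the paper's convention $P_{i,j}$ is the probability of moving from source $j$ to target $i$, so the $\ell N_s$-step return probability is indeed the $(N,N)$ entry of $\mathbf P^{\ell N_s}$. I would also need to justify that there are no dependencies across batches; this is precisely the non-saturating assumption stated in \cref{sec:general_framework}, and I would invoke it explicitly. No further tools are required.
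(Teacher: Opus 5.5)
Your proposal is correct and follows the paper's proof essentially step for step. Both use the molecule-level indicator decomposition, then independence across and within batches to discard cross terms and restrict the sum to $r\le k-\ell$, then the time-homogeneous marginals $h_{k-r}$ and $h_{k-\ell-r}$, and finally the Markov property to obtain the joint probability $h_{k-\ell-r}\,\pi_N^{(\ell)}$. The only difference is cosmetic: you index on the fine-time grid rather than at symbol times. Note that stating the joint probability directly as $(\mathbf P^{\ell N_s})_{N,N}\Pr\{R_r[(k-\ell)N_s]=s_N\}$, rather than through a conditional probability, also covers the term $r=k-\ell$, where $h_0=0$ makes the conditioning event null.
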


\begin{proof}
See Appendix~\cref{app:proof_cov_microarray}.
\end{proof}

To characterize the colored nature of the counting noise in a dimensionless form, we define the lag-$\ell$ correlation coefficient
\begin{equation}
    \rho_k[\ell]\triangleq
    \frac{\Cov(w_k,w_{k-\ell})}{\sigma_k\,\sigma_{k-\ell}},
    \qquad k\ge \ell\ge 1.
    \label{eq:corrcoef_w}
\end{equation}

The key physical reason for this temporal dependence is that the same molecule may contribute to multiple symbol-rate observations. Unlike an absorbing RX, a molecule that is bound at time $k-\ell$ may still be bound at time $k$, or may unbind and later rebind before time $k$. Hence, the symbol-rate observation exhibits not only mean-domain memory through the ISI taps $\{h_\ell\}$, but also noise-domain memory through the covariance of the counting noise. The return probability $\pi_N^{(\ell)}$ compactly captures this persistence across symbol intervals.

At symbol time $k$, the observation $z_k$ counts how many molecules are bound on the microarray surface. Unlike an absorbing receiver, a molecule that is bound at time $k-\ell$ may still be bound at time $k$ (or may unbind and rebind before time $k$). Hence, the same molecule can contribute to multiple samples, creating statistical dependence over time and yielding colored counting noise. The return probability $\pi_N^{(\ell)}=(\mathbf P^{\ell N_s})_{N,N}$ compactly captures this persistence across $\ell$ symbol intervals.

\begin{corollary}[Lag-dependent covariance decay]
\label{cor:lag_decorrelation}
For the ergodic microarray Markov chain, under the continuous-time surrogate in~\cref{eq:cts_approx}, there exists a constant $C>0$ such that, for all valid indices $k\ge \ell\ge 1$,
\begin{equation}
    \bigl|\Cov(w_k,w_{k-\ell})\bigr|
    \le C\,(k-\ell+1)\exp\!\left(-\frac{\ell T_b}{\tau}\right),
    \label{eq:cov_decay_bound_tau}
\end{equation}
where $\tau$ is the characteristic time defined in~\cref{eq:tau_def}. Hence, the temporal dependence of the counting noise decays with the lag $\ell$.
\end{corollary}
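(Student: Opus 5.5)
We start from the closed-form covariance in \cref{prop:cov_microarray} and bound each summand separately. Rewrite each factor as $\pi_N^{(\ell)}-h_{k-r}=\bigl(\pi_N^{(\ell)}-h^{\mathrm{eq}}\bigr)-\bigl(h_{k-r}-h^{\mathrm{eq}}\bigr)$. Both differences measure how far the chain is from its stationary distribution after at least $\ell$ symbol intervals. Indeed, $\pi_N^{(\ell)}=\boldsymbol e_N^{\top}\mathbf P^{\ell N_s}\boldsymbol e_N$ is the bound-state observation at time $\ell T_b$ when the initial distribution is $\boldsymbol e_N$. Likewise, $h_{k-r}=\boldsymbol e_N^{\top}\mathbf P^{(k-r)N_s}\boldsymbol e_1$ is the same observation at time $(k-r)T_b$ when the initial distribution is $\boldsymbol e_1$. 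Since $r\le k-\ell$, we have $k-r\ge \ell$.

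Next we invoke \cref{prop:mixing_rate}, once with initial distribution $\boldsymbol e_N$ and once with $\boldsymbol e_1$. This gives
\[
\bigl|\pi_N^{(\ell)}-h^{\mathrm{eq}}\bigr|\le C_N e^{-\ell T_b/\tau},\qquad
\bigl|h_{k-r}-h^{\mathrm{eq}}\bigr|\le C_1 e^{-(k-r)T_b/\tau}\le C_1 e^{-\ell T_b/\tau}.
\]
Here we use that the $\infty$-norm bound on $\bm x(t)-\bm x^{\mathrm{eq}}$ controls the $N$-th coordinate, and that under the surrogate $h_\ell=h(\ell T_b)$. The constants depend only on the initial state. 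Since only two initial states occur, we take $C'=C_1+C_N$, so that $|\pi_N^{(\ell)}-h_{k-r}|\le C' e^{-\ell T_b/\tau}$ uniformly in $k,r,\ell$.

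To finish, bound the remaining factors: $a_r\in\{0,1\}$, and $0\le h_{k-\ell-r}\le 1$ because it is a probability. The sum has $k-\ell+1$ terms, so the triangle inequality gives \cref{eq:cov_decay_bound_tau} with $C=QC'$.

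The main obstacle is justifying that \cref{prop:mixing_rate} applies to both quantities with a common rate $\tau$ and with constants uniform in the initial distribution. This requires the spectral bound (SLEM $|\lambda_1|$, with $\mathbf A$ having eigenvalues $(\lambda-1)/\Delta t$) to hold for arbitrary initial vectors. It also requires a consistent treatment of sampling: identify the discrete $\mathbf P^{\ell N_s}$ with the surrogate $\exp(\mathbf A\ell T_b)$, as the corollary's hypothesis stipulates. Alternatively, take the maximum over the finitely many basis initial vectors, which is legitimate because the state space is finite. Everything else is a routine triangle-inequality computation.
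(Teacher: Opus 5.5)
Your proposal is correct and follows essentially the same route as the paper's proof. Like the paper, you split $\pi_N^{(\ell)}-h_{k-r}$ around $x_N^{\mathrm{eq}}=h^{\mathrm{eq}}$, apply Proposition~\ref{prop:mixing_rate} with initial vectors $\boldsymbol e_N$ and $\boldsymbol e_1$ using $k-r\ge\ell$, and then bound the $k-\ell+1$ summands with $a_r\in\{0,1\}$ and $0\le h_{k-\ell-r}\le 1$; the only difference is that the paper takes one constant $C_\star$ uniform over all basis vectors, giving $C=2QC_\star$, where you take $C=Q(C_1+C_N)$.
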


\begin{proof}
See Appendix~\cref{app:proof_cor_lag_decorrelation_tau}.
\end{proof}

Following \cref{cor:lag_decorrelation}, \cref{fig:rho_vs_ell} illustrates the lag-dependent correlation coefficient $\rho[\ell]$ of the counting noise for different symbol intervals $T_b\in\{0.02,0.05,0.1,0.3\}$~s and settling times $t^{\mathrm{eq}}\in\{1.85,0.75,0.49\}$~s, while keeping the equilibrium gain $h^{\mathrm{eq}}$ fixed. A common trend in all subfigures is that $\rho[\ell]$ decreases monotonically with the lag $\ell$ and eventually approaches zero, which is consistent with the exponential covariance decay predicted by~\cref{cor:lag_decorrelation}. 

For a fixed symbol interval $T_b$, a larger settling time produces a noticeably longer correlation tail: in each subfigure, the curve for $t^{\mathrm{eq}}=1.85$~s lies above that for $t^{\mathrm{eq}}=0.75$~s, which in turn lies above that for $t^{\mathrm{eq}}=0.49$~s. This behavior is consistent with~\cref{eq:cov_wk_wk_l_microarray}, since a larger $t^{\mathrm{eq}}$ implies slower relaxation of the underlying Markov chain and hence a slower decay of the bound-state return probability $\pi_N^{(\ell)}=(\mathbf P^{\ell N_s})_{N,N}$. As a result, the covariance term $\Cov(w_k,w_{k-\ell})$ remains significant over a wider range of lags, yielding stronger temporal correlation in the counting noise. This is also consistent with the fact that a larger $t^{\mathrm{eq}}$ corresponds to a longer effective channel memory and stronger ISI.

For a fixed settling time, increasing the symbol interval $T_b$ accelerates the decay of $\rho[\ell]$ in the symbol domain. This trend is evident when comparing \cref{fig:rho-a,fig:rho-b,fig:rho-c,fig:rho-d}: as $T_b$ increases from $0.02$~s to $0.3$~s, the correlation tail becomes progressively shorter. The reason is that a larger $T_b$ means that the same lag $\ell$ corresponds to a larger physical-time separation $\ell T_b$, so the system has more time to relax toward equilibrium between two symbol-rate observations. Consequently, the counting noise becomes less temporally correlated.

Overall, the correlation span in the symbol domain is governed by the ratio $t^{\mathrm{eq}}/T_b$. When $T_b\ll t^{\mathrm{eq}}$, the counting noise remains correlated over many symbols; when $T_b\gg t^{\mathrm{eq}}$, the temporal correlation becomes weak and symbol-wise memoryless receiver models become increasingly accurate.


\begin{figure}[t]
    \centering
    \subfigure[$T_\mathrm{b}=0.02$ s\label{fig:rho-a}]{\includegraphics[width=0.493\linewidth]{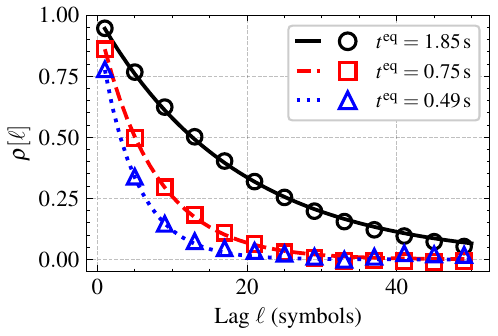}}\hfill
    \subfigure[$T_\mathrm{b}=0.05$ s\label{fig:rho-b}]{\includegraphics[width=0.493\linewidth]{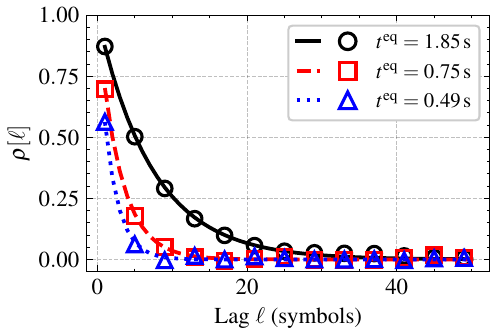}}\\
    \vspace{-0.2cm}
    \subfigure[$T_\mathrm{b}=0.1$ s\label{fig:rho-c}]{\includegraphics[width=0.493\linewidth]{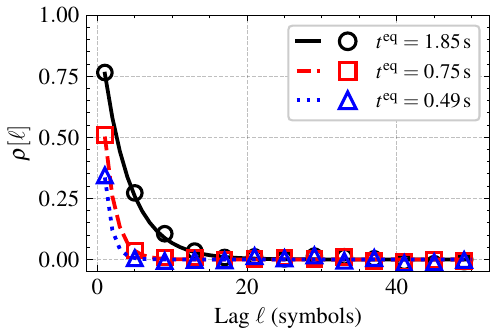}}
    \subfigure[$T_\mathrm{b}=0.3$ s\label{fig:rho-d}]{\includegraphics[width=0.493\linewidth]{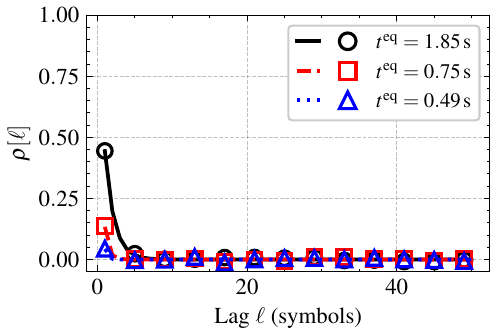}}\hfill
    \vspace{-0.3cm}
    \caption{Time-averaged lag-dependent correlation coefficient $\rho[\ell]$ of the counting noise for independent equiprobable OOK signaling. Different subfigures correspond to different symbol intervals $T_b$, and different curves correspond to different settling times $t^{\mathrm{eq}}$ with fixed equilibrium gain $h^{\mathrm{eq}}$. Solid lines denote theory, and markers denote Monte Carlo estimates.}
    \label{fig:rho_vs_ell}
    \vspace{-0.5cm}
\end{figure}

\section{Differential Observation and Detection}
\label{sec:detection}

In this section, we develop low-complexity RXs for the symbol-rate observation sequence $\{z_k\}$. Because microarray-style reception is inventory-like under reversible hybridization, the symbol-rate observation contains a non-vanishing baseline and long memory, which makes direct level detection unreliable in strong-memory regimes. To mitigate this effect, we first transform the observation sequence into a differential form and then develop two practical detectors: a differential-threshold detector for weak-memory regimes and a finite-memory DFE for strong-memory regimes. The focus of this work is on low-complexity receiver design rather than optimal detection schemes, such as maximum a posteriori (MAP) 
and sequence detection. More advanced optimal or near-optimal receivers are beyond the scope of this work and will be investigated in future research.

\begin{figure}[ht]
    \centering
    \subfigure[$T_\mathrm{b}=0.02$ s\label{fig:delta_h-a}]{\includegraphics[width=0.493\linewidth]{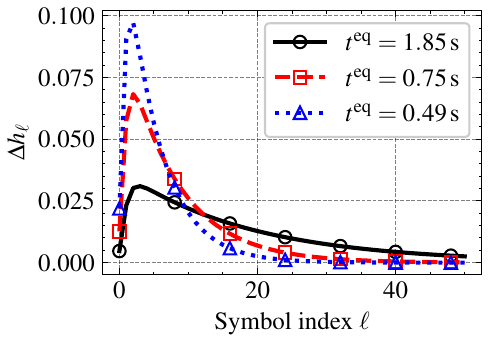}}\hfill
    \subfigure[$T_\mathrm{b}=0.05$ s\label{fig:delta_h-b}]{\includegraphics[width=0.493\linewidth]{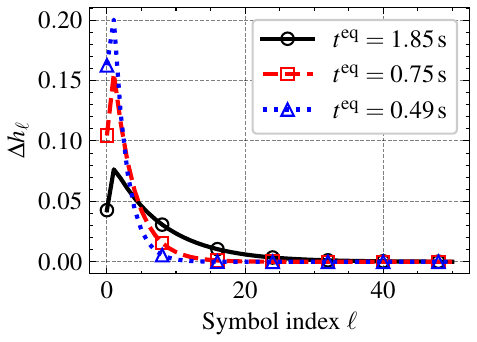}}\\
    \vspace{-0.2cm}
    \subfigure[$T_\mathrm{b}=0.1$ s\label{fig:delta_h-c}]{\includegraphics[width=0.493\linewidth]{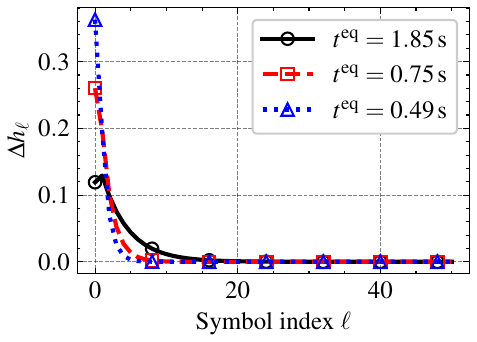}}
    \subfigure[$T_\mathrm{b}=0.3$ s\label{fig:delta_h-d}]{\includegraphics[width=0.493\linewidth]{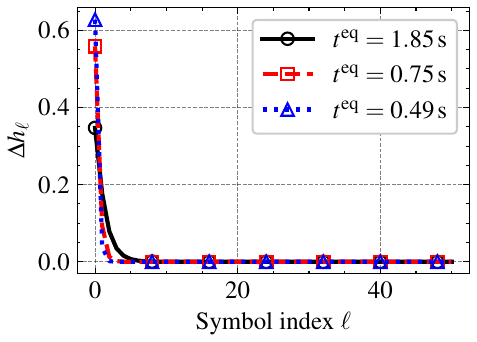}}\hfill
    \vspace{-0.3cm}
    \caption{Increment-domain ISI kernel $\Delta h_\ell$ for symbol intervals $T_b\in\{0.02,\,0.05,\,0.1,\,0.3\}$~s and settling times $t^{\mathrm{eq}}\in\{1.85,\,0.75,\,0.49\}$~s, with fixed equilibrium gain $h^{\mathrm{eq}}$. Larger $t^{\mathrm{eq}}$ yields a longer ISI tail, while larger $T_b$ reduces the ISI span in the symbol domain.}
    \label{fig:isi_kernel_tb_teq}
    \vspace{-0.5cm}
\end{figure}

\subsection{Differential Readout and ISI Characterization}
\label{subsec:det_delta_isi}

The symbol-rate observation $z_k$ represents the number of bound molecules at symbol boundaries and is therefore inventory-like under reversible hybridization: once bound, molecules may remain on the receiver surface or return to it after unbinding. As a result, the observation sequence contains a slowly varying baseline and long temporal memory, which can mask the symbol-dependent increment and degrade direct level detection in strong-memory regimes. To suppress this baseline while preserving the symbol-induced change, we consider the first-order difference
\begin{equation}
    \Delta z_k \triangleq z_{k+1}-z_k,\qquad k=0,1,2,\ldots,
    \label{eq:delta_z_def}
\end{equation}
which represents the net change of the bound-state population over one symbol interval.

Using the symbol-rate observation model in~\cref{eq:zk_decomp}, the differential observation in~\cref{eq:delta_z_def} can be written as
\begin{equation}
    \Delta z_k = (\bar z_{k+1}-\bar z_k) + \Delta w_k,
    \label{eq:delta_z_decomp}
\end{equation}
where
\begin{equation}
    \Delta w_k \triangleq w_{k+1}-w_k
\end{equation}
denotes the differenced counting noise. Since $w_k$ is zero-mean, $\Delta w_k$ is also zero-mean, with variance
\begin{equation}
    \Var(\Delta w_k)=\sigma_{k+1}^2+\sigma_k^2-2\Cov(w_{k+1},w_k),
    \label{eq:var_delta_w}
\end{equation}
where $\sigma_k^2$ is given in~\cref{eq:var_z_symbol} and $\Cov(w_{k+1},w_k)$ follows from~\cref{prop:cov_microarray} with $\ell=1$.

Substituting~\cref{eq:zk_bar_symbol_isi} into~\cref{eq:delta_z_decomp} yields the increment-domain ISI model
\begin{equation}
    \Delta z_k
    = Q\sum_{\ell=0}^{k} a_{k-\ell}\,\Delta h_\ell + \Delta w_k,
    \label{eq:delta_z_isi}
\end{equation}
where the differential CIR taps are defined as
\begin{equation}
    \Delta h_\ell \triangleq h_{\ell+1}-h_\ell,\qquad \ell=0,1,2,\ldots .
    \label{eq:delta_h_def}
\end{equation}
Since $h_\ell\to h^{\mathrm{eq}}$ as $\ell$ increases, it follows that $\Delta h_\ell\to 0$, i.e., the increment-domain ISI taps decay as the channel approaches equilibrium.

Separating the current-symbol contribution in~\cref{eq:delta_z_isi} gives
\begin{equation}
    \Delta z_k
    = \underbrace{Q a_k\,\Delta h_0}_{\text{desired term}}
    + \underbrace{Q\sum_{\ell=1}^{k} a_{k-\ell}\,\Delta h_\ell}_{\text{residual ISI}}
    + \Delta w_k ,
    \label{eq:delta_z_signal_isi}
\end{equation}
where the first term carries the current symbol information, while the remaining taps $\{\Delta h_\ell\}_{\ell\ge 1}$ quantify the post-cursor ISI in the differential domain.

\Cref{fig:isi_kernel_tb_teq} illustrates the increment-domain ISI kernel for different symbol intervals and settling times. A larger settling time produces a longer tail of $\Delta h_\ell$, which is consistent with a longer effective channel memory. In contrast, increasing $T_b$ compresses the physical-time memory into fewer symbol intervals and therefore shortens the ISI span in the symbol domain. Thus, the structure of $\{\Delta h_\ell\}$ provides a direct indication of whether a simple differential-threshold detector is sufficient or whether finite-memory feedback-based cancellation is needed.

Specifically, \cref{fig:delta_h-a,fig:delta_h-b} correspond to strong-memory regimes with small symbol intervals, where several post-cursor taps remain significant and decay slowly. \Cref{fig:delta_h-c} represents an intermediate regime, in which only the first few post-cursor taps remain relevant. Finally, \cref{fig:delta_h-d} corresponds to a weak-memory regime, where the post-cursor taps become negligible after only a few lags and the residual ISI term in~\cref{eq:delta_z_signal_isi} can be largely ignored.

\subsection{Low-Complexity Differential Detection}
\label{subsec:det_diff}

We now develop practical detectors for the differential observation sequence $\{\Delta z_k\}$.

In weak-memory regimes, the post-cursor ISI term in~\cref{eq:delta_z_signal_isi} is small, and the differential observation is dominated by the current-symbol component $Q a_k\Delta h_0$. This motivates a memory-less \emph{differential-threshold detector}:
\begin{equation}
    \hat a_k=
    \begin{cases}
    1, & \Delta z_k>\eta,\\
    0, & \text{otherwise}.
    \end{cases}
    \label{eq:det_diff_thresh}
\end{equation}
Under the weak-memory approximation, the conditional means are approximately
\begin{align}
    \mu_{0,k} &\triangleq \E\{\Delta z_k\mid a_k=0\}\approx 0, \\
    \mu_{1,k} &\triangleq \E\{\Delta z_k\mid a_k=1\}\approx Q\,\Delta h_0.
    \label{eq:mu_weak_isi}
\end{align}
Assuming equiprobable OOK, i.e., $\Pr\{a_k=1\}=1/2$, we adopt the midpoint threshold
\begin{equation}
    \eta \approx \frac{\mu_{0,k}+\mu_{1,k}}{2}
    \approx \frac{Q\,\Delta h_0}{2}.
    \label{eq:eta_diff}
\end{equation}
For the considered microarray specialization, $h_0=0$ because the transmitter state is a free state while the observable state is the bound state. Hence, $\Delta h_0=h_1$ in this case.

When the effective channel memory is large, the residual ISI term in~\cref{eq:delta_z_signal_isi} can no longer be ignored, and a memory-less threshold becomes inadequate. We therefore consider a finite-memory DFE, which estimates and cancels the dominant post-cursor ISI using previous decisions. For a chosen memory length $L$, the estimated ISI is
\begin{equation}
    \widehat{\mathrm{ISI}}_k
    \triangleq Q\sum_{\ell=1}^{L} \hat a_{k-\ell}\,\Delta h_\ell,
    \label{eq:isi_hat_delta}
\end{equation}
and the resulting residual statistic is
\begin{equation}
    r_k \triangleq \Delta z_k - \widehat{\mathrm{ISI}}_k.
    \label{eq:rk_def}
\end{equation}
The DFE decision rule is
\begin{equation}
    \hat a_k=
    \begin{cases}
    1, & r_k>\eta^{\mathrm{DF}},\\
    0, & \text{otherwise},
    \end{cases}
    \label{eq:det_dfe}
\end{equation}
where $\eta^{\mathrm{DF}}$ denotes the detection threshold after feedback cancellation.

In this work, to isolate the gain due to ISI cancellation rather than threshold optimization, we use the same midpoint threshold as in~\cref{eq:eta_diff}, i.e.,
\begin{equation}
    \eta^{\mathrm{DF}}=\eta=\frac{Q\,\Delta h_0}{2}.
\end{equation}
This choice is reasonable when the dominant residual ISI has been effectively canceled, in which case $r_k$ is again primarily governed by the current-symbol component $Q a_k\Delta h_0$. More refined threshold optimization under colored counting noise is left for future work.


\section{Numerical Results}
\label{sec:numerical_results}
In this section, we evaluate the proposed low-complexity RXs, namely the differential-threshold detector in~\cref{eq:det_diff_thresh} and the DFE in~\cref{eq:det_dfe} with memory lengths $L\in\{1,3,5\}$. All results are obtained from the symbol-rate model developed in~\cref{sec:io_noise,sec:detection}, where the channel response is computed from~\cref{eq:cir_1d_explicit} and the differential observation follows~\cref{eq:delta_z_isi}. Unless stated otherwise, the simulation parameters are summarized in~\cref{tab:parameters}. For the default channel setting, the settling time and equilibrium gain are $t^{\mathrm{eq}}=0.75$~s and $h^{\mathrm{eq}}=0.67$, respectively. Performance is reported in terms of bit error rate (BER), defined as the fraction of incorrectly detected symbols.

\begin{table}[t]
    \centering
    \caption{Simulation Parameters.}
    \label{tab:parameters}
    \begin{tabular}{@{}lccc@{}}
        \toprule
            \textbf{Parameter} & \textbf{Symbol} & \textbf{Value}    & \textbf{Unit} \\ 
            \midrule
            Diffusion coefficient of cDNAs  & $D$               & \num{150}    & \si{\micro\metre\squared\per\second} \\
            Number of released cDNAs        & $Q$               & \num{1000}       & -- \\
            Concentration of probe DNAs          & $c_\mathrm{p}$      & \num{1}       & \si{\micro M} \\
            Number of free states                     & $N_\mathrm{f}$               & \num{100}        & -- \\
            Binding rate                & $k_\text{on}$     & \num{6e8}        & \si{ M^{-1}.s^{-1}} \\
            Unbinding rate             & $k_\text{off}$    & \num{3}          & \si{s^{-1}} \\
            Spatial step size                    & $\Delta x$        & \num{50}       & \si{\nano\meter} \\
            Time step size                       & $\Delta t$        & \num{8.25}    & \si{\micro s} \\
            Symbol sequence length               & $B$               & \num{1e6} & --\\
        \bottomrule
    \end{tabular}
    \vspace{-0.3cm}
\end{table}

\subsection{BER Versus Molecule Budget $Q$}
\label{subsec:ber_vs_q}

The molecule budget $Q$ controls the strength of the useful observation relative to the counting noise and therefore plays a role analogous to an signal-to-noise ratio (SNR) parameter in conventional MC detection problems~\cite{zheng2020noise,zheng2025anis}. Although the counting noise considered here is temporally correlated, increasing $Q$ still improves the BER performance in general by enhancing the desired differential signal relative to both residual ISI and noise.

\Cref{fig:ber_q} shows the BER as a function of $Q$ in two representative regimes. In \cref{fig:ber_vs_q_weak}, the symbol interval is $T_b=0.3$~s, which corresponds to a weak-memory regime. In this case, the differential-threshold detector and the DFE achieve very similar performance, and the curves for $L=1,3,5$ are nearly indistinguishable. This behavior is consistent with \cref{fig:isi_kernel_tb_teq}: for large $T_b$, the post-cursor taps $\{\Delta h_\ell\}_{\ell\ge1}$ are already small, so the dominant residual ISI is captured by only the first tap. Hence, extending the feedback memory provides little additional benefit.

With increasing $Q$, the BER decreases in both the weak-ISI and strong-ISI regimes. \Cref{fig:ber_vs_q_weak} illustrates the weak-ISI case with $T_b=0.3$~s, differential DFE slightly outperforms the memory-less threshold detector, while the performance of DFE with $L=1,3,5$ is very similar. This is because, for $T_b=0.3$~s, the dominant post-cursor ISI is concentrated in the first tap (cf.~\cref{fig:delta_h-d}), and therefore the estimated ISI term in~\cref{eq:isi_hat_delta} is already well captured with $L=1$. As a result, increasing the feedback memory to $L=3$ or $L=5$ brings only marginal improvement. 

In contrast, \cref{fig:ber_vs_q_strong} corresponds to a strong-memory regime with $T_b=0.1$~s. Here, the differential-threshold detector and the DFE with $L=1$ remain interference-limited, especially at large $Q$. The reason is that the residual ISI is no longer concentrated in only the first post-cursor tap; instead, several taps remain significant in the increment domain. As a result, canceling only the first tap is insufficient, whereas the DFE with $L=3$ and $L=5$ achieves a clear BER improvement. This is consistent with \cref{fig:delta_h-c}, which shows that the dominant post-cursor interference spans the first few symbol lags in this regime.

Overall, the benefit of increasing the DFE memory length depends on both the effective channel memory and the molecule budget. When the channel memory is weak, a small feedback length already captures the dominant residual ISI. When the memory is strong, a larger $L$ becomes beneficial because more post-cursor terms must be canceled. In addition, the gain from feedback also depends on decision reliability: for small $Q$, the symbol-rate observation is more noise-limited, so error propagation reduces the benefit of longer feedback. As $Q$ increases, previous decisions become more reliable, which improves ISI cancellation and makes the gain from larger $L$ more visible.

 \begin{figure}[t]
    \centering
    \subfigure[\label{fig:ber_vs_q_weak}Weak ISI, $T_b=0.3$ s]{%
        \centering
        \includegraphics[width=0.48\linewidth]{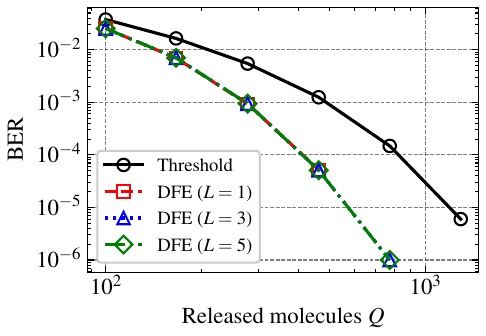}}
    \hfill
    \subfigure[\label{fig:ber_vs_q_strong}Strong ISI, $T_b=0.1$ s]{%
        \centering
        \includegraphics[width=0.48\linewidth]{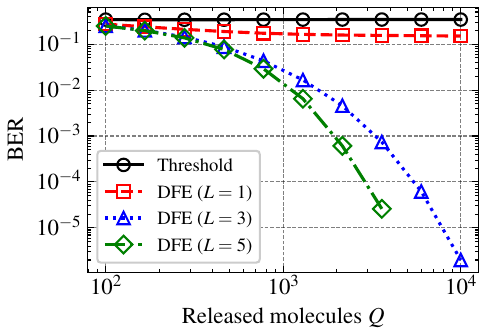}}
        \caption{BER versus $Q$ for differential detectors (memory-less differential-threshold detector with differential decision-feedback equalization (DFE) using $L\in\{1,3,5\}$ feedback taps) in (a) weak-ISI ($T_b=0.3$~s) and (b) strong-ISI ($T_b=0.1$~s) settings.}
        \label{fig:ber_q}
    \vspace{-0.4cm}
\end{figure}

\subsection{BER Versus Symbol Interval $T_b$}
\label{subsec:ber_vs_tb}

For a fixed channel realization, the severity of ISI is mainly governed by the normalized symbol interval $T_b/t^{\mathrm{eq}}$; see~\cref{rem:design_Tb_teq}. This follows directly from the symbol-rate mean model in~\cref{eq:zk_bar_symbol_isi}, where the residual contribution of previous symbols is determined by the post-cursor taps $\{h_\ell\}_{\ell\ge1}$. Since the CIR approaches the equilibrium gain on the time scale characterized by the settling time $t^{\mathrm{eq}}$; see~\cref{eq:tau_def,eq:teq_def}, the ratio $T_b/t^{\mathrm{eq}}$ quantifies how much channel relaxation takes place within one symbol interval. When $T_b/t^{\mathrm{eq}}\ll 1$, several post-cursor taps remain significant and the ISI is strong. In contrast, when $T_b/t^{\mathrm{eq}}\gg 1$, the post-cursor taps decay rapidly and the residual ISI becomes weak. The same interpretation also applies in the differential domain, where the residual interference is governed by the decay of the increment-domain taps $\Delta h_\ell$ in~\cref{eq:delta_h_def}.

\Cref{fig:ber_vs_tb:a} shows the BER for $Q=1000$, which corresponds to a relatively noise-limited operating point. In the small-$T_b/t^{\mathrm{eq}}$ regime, the DFE consistently outperforms the differential-threshold detector. In this regime, the effective channel memory is long relative to the symbol interval, so previously released molecules continue to contribute over multiple symbol times and the post-cursor terms in~\cref{eq:delta_z_signal_isi} remain non-negligible. By explicitly estimating and canceling these terms through~\cref{eq:isi_hat_delta}, the DFE achieves a noticeable BER gain. As $T_b/t^{\mathrm{eq}}$ increases, the post-cursor taps decay more rapidly, the residual ISI weakens, and the BER curves gradually converge. Hence, in the weak-memory regime, the simple differential-threshold detector becomes sufficient.

At this lower molecule budget, the performance gap between $\mathrm{DFE}(L=3)$ and $\mathrm{DFE}(L=5)$ remains relatively small. This is mainly due to error propagation in the feedback loop: when the differential observation is noisy, the ISI estimate in~\cref{eq:isi_hat_delta} is formed from imperfect previous decisions, which limits the gain of additional feedback taps.

\Cref{fig:ber_vs_tb:b} shows the corresponding result for $Q=5000$, where the receiver operates in a less noise-limited regime. In this case, the BER is more strongly influenced by residual ISI than by counting noise. As expected, the BER of all detectors decreases as $T_b/t^{\mathrm{eq}}$ increases, consistent with the shortening of the increment-domain ISI kernel in~\cref{fig:isi_kernel_tb_teq}. Moreover, the DFE consistently outperforms the differential-threshold detector, and increasing the memory length improves performance, although with diminishing returns. This again reflects the fact that the dominant residual ISI is carried by only the first few differential taps.

Taken together, the results in \cref{fig:ber_tb} show that the normalized symbol interval $T_b/t^{\mathrm{eq}}$ provides a meaningful regime indicator for detector selection. When $T_b/t^{\mathrm{eq}}$ is large, the channel is effectively weak-memory, and the differential-threshold detector provides a favorable low-complexity solution. When $T_b/t^{\mathrm{eq}}$ is small, the effective channel memory becomes significant and finite-memory decision feedback is beneficial, especially when the molecule budget is sufficiently large to keep error propagation under control.

\begin{figure}[t]
    \centering
    \subfigure[\label{fig:ber_vs_tb:a}$Q=1000$]{%
        \centering
        \includegraphics[width=0.48\linewidth]{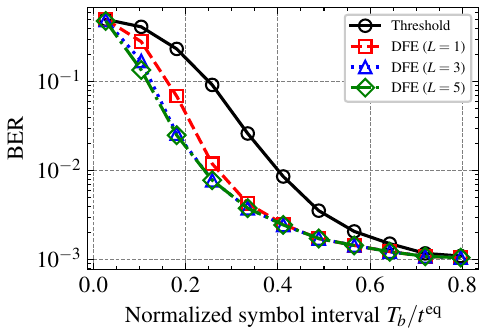}}
    \hfill
    \subfigure[\label{fig:ber_vs_tb:b}$Q=5000$]{%
        \centering
        \includegraphics[width=0.485\linewidth]{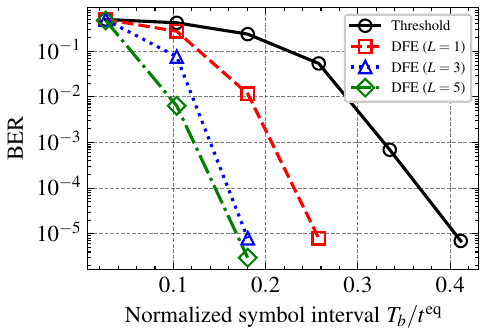}}
    \caption{BER versus the normalized symbol interval $T_b/t^{\mathrm{eq}}$ for the differential-threshold detector and the DFE with memory lengths $L\in\{1,3,5\}$.}
    \label{fig:ber_tb}
    \vspace{-0.4cm}
\end{figure}

\section{Conclusion and Future Work}
\label{sec:conclusion}
This paper developed a Markov state-space framework for DNA-based MC with microarray-style reception under reversible hybridization. Starting from a voxelized reaction--diffusion model, we constructed a block-structured transition matrix that captures molecular transport and binding/unbinding dynamics in a communication-oriented form. For the considered 1D microarray specialization, the proposed framework yields the CIR, the equilibrium gain, and a settling-time-based characterization of the effective channel memory.

Building on the resulting symbol-rate observation model for OOK, we further derived a grouped-binomial counting model and a closed-form expression for the covariance of the counting noise, thereby explicitly characterizing its colored nature. Based on these channel and noise statistics, we developed two low-complexity receivers, namely a differential-threshold detector and a finite-memory DFE. Numerical results showed that the relative performance of the proposed receivers depends strongly on the channel-memory regime, and that the normalized symbol interval $T_b/t^{\mathrm{eq}}$ provides a meaningful indicator for detector selection.

Overall, the proposed Markov state-space perspective provides a principled link between physical channel dynamics, communication-level channel characterization, and practical RX design for DNA microarray channels.

Several extensions are of interest for future work. First, the derived covariance structure can be exploited in more advanced receiver designs, including covariance-aware detectors and finite-memory sequence detectors. Second, the proposed framework can be extended beyond the present 1D specialization to higher-dimensional advection--diffusion settings and more general boundary conditions. Third, the transition-matrix formulation can be generalized to multi-observable and multi-terminal architectures, thereby enabling analysis of MIMO-like receptor-based molecular communication systems.

\appendices
\section{Proof of Proposition~\ref{prop:mixing_rate}}
\label{app:proof_prop_mixing}

We prove the claim for the continuous-time surrogate
\(
\mathbf A=(\mathbf P-\mathbf I)/\Delta t
\)
in~\cref{eq:A_def}. Assume for simplicity that $\mathbf P$ is diagonalizable.\footnote{If $\mathbf P$ is not diagonalizable, the same conclusion follows from its Jordan form; the additional polynomial factors do not change the exponential rate.}
Thus,
\begin{equation}
  \mathbf P=\mathbf V\boldsymbol\Lambda\mathbf V^{-1},
  \qquad
  \boldsymbol\Lambda=\mathrm{diag}(\lambda_0,\lambda_1,\ldots,\lambda_{N-1}),
  \label{eq:app_P_diag}
\end{equation}
where $\lambda_0=1$ and $|\lambda_i|<1$ for $i\ge 1$. Let
\[
|\lambda_1|=\max_{i\ge 1}|\lambda_i|
\]
denote the SLEM of $\mathbf P$.

Using~\cref{eq:A_def}, we obtain
\begin{equation}
  \mathbf A
  =\frac{\mathbf P-\mathbf I}{\Delta t}
  =\mathbf V\left(\frac{\boldsymbol\Lambda-\mathbf I}{\Delta t}\right)\mathbf V^{-1}.
  \label{eq:app_A_diag}
\end{equation}
Hence,
\begin{equation}
\exp(\mathbf A t)
=
\mathbf V\,
\mathrm{diag}\!\left(
1,\,
e^\frac{(\lambda_1-1)t}{\Delta t},\ldots,
e^\frac{(\lambda_{N-1}-1)t}{\Delta t}
\right)\mathbf V^{-1}.
\label{eq:app_expAt}
\end{equation}
Since $|\lambda_i|<1$ for $i\ge 1$, we have $\Re(\lambda_i)-1<0$, and therefore
\[
e^{(\lambda_i-1)t/\Delta t}\to 0,\qquad t\to\infty .
\]
Thus, the equilibrium component is
\begin{equation}
\bm x^{\mathrm{eq}}
=
\mathbf V\,\mathrm{diag}(1,0,\ldots,0)\mathbf V^{-1}\bm x(0),
\label{eq:app_xeq}
\end{equation}
and
\begin{equation}
\bm x(t)-\bm x^{\mathrm{eq}}
=
\mathbf V\,\mathrm{diag}\!\left(
0,\,
e^\frac{(\lambda_1-1)t}{\Delta t},\ldots,
e^\frac{(\lambda_{N-1}-1)t}{\Delta t}
\right)\mathbf V^{-1}\bm x(0).
\label{eq:app_error_vec}
\end{equation}
Taking the $\infty$-norm and using submultiplicativity yields
\begin{equation}
\begin{aligned}
&\|\bm x(t)-\bm x^{\mathrm{eq}}\|_\infty
\le
\|\mathbf V\|_\infty\\
&\times
\left\|
\mathrm{diag}\!\left(
0,\,
e^{\frac{(\lambda_1-1)t}{\Delta t}},\ldots,
e^{\frac{(\lambda_{N-1}-1)t}{\Delta t}}
\right)
\right\|_\infty 
\|\mathbf V^{-1}\|_\infty\,\|\bm x(0)\|_\infty \\
&=
\kappa_\infty(\mathbf V)\,
\max_{i\ge 1}\left|e^{\frac{(\lambda_i-1)t}{\Delta t}}\right|
\,\|\bm x(0)\|_\infty .
\end{aligned}
\label{eq:app_norm_bound}
\end{equation}
where $\kappa_\infty(\mathbf V)\triangleq \|\mathbf V\|_\infty\|\mathbf V^{-1}\|_\infty$. Moreover, for each $i\ge 1$,
\begin{equation}
\left|e^\frac{(\lambda_i-1)t}{\Delta t}\right|
=
e^\frac{(\Re\{\lambda_i\}-1)t}{\Delta t}
\le
e^\frac{(|\lambda_i|-1)t}{\Delta t}
\le
e^\frac{-(1-|\lambda_1|)t}{\Delta t},
\label{eq:app_slem_step}
\end{equation}
where we used $\Re\{\lambda_i\}\le |\lambda_i|$ and $|\lambda_i|\le |\lambda_1|$. Substituting~\cref{eq:app_slem_step} into~\cref{eq:app_norm_bound} gives
\begin{equation}
\|\bm x(t)-\bm x^{\mathrm{eq}}\|_\infty
\le
C\,\exp\!\left(-\frac{(1-|\lambda_1|)t}{\Delta t}\right),
\label{eq:app_bound_pre_tau}
\end{equation}
where
\[
C\triangleq \kappa_\infty(\mathbf V)\|\bm x(0)\|_\infty .
\]
Invoking~\cref{eq:tau_def} in~\cref{eq:app_bound_pre_tau} yields
\begin{equation}
\|\bm x(t)-\bm x^{\mathrm{eq}}\|_\infty
\le
C\,\exp\!\left(-\frac{t}{\tau}\right),
\end{equation}
which proves~\cref{prop:mixing_rate}.

\section{Proof of Proposition~\ref{prop:cov_microarray}}
\label{app:proof_cov_microarray}

This appendix proves~\cref{prop:cov_microarray}. Throughout the proof, the transmit sequence $\{a_k\}$ is treated as fixed, so that $\bar z_k$ is deterministic and
\[
\Cov(w_k,w_{k-\ell})=\Cov(z_k,z_{k-\ell}).
\]

\subsection{Batch-wise decomposition}
Sampling~\cref{eq:z_superposition_dt} at symbol boundaries $n=kN_s$ yields
\begin{equation}
    z_k
    = \sum_{r=0}^{k} z^{(r)}[kN_s]
    \triangleq \sum_{r=0}^{k} z_k^{(r)},
    \label{eq:app_zk_batch_decomp}
\end{equation}
where $z_k^{(r)}$ denotes the contribution of the $r$-th release batch to the symbol-time observation $z_k$.

For the $r$-th batch, the batch size is $Qa_r$. For each molecule $m\in\{1,\ldots,Qa_r\}$, let
\begin{equation}
    R_{r,m}[n]\in\mathcal S,\qquad n\ge rN_s,
    \label{eq:app_Rrm_state}
\end{equation}
denote its Markov state, with initialization
\begin{equation}
    R_{r,m}[rN_s]=s_{i^{\mathrm{TX}}}.
    \label{eq:app_Rrm_init}
\end{equation}
Define the bound-state indicator at symbol time $k$ as
\begin{equation}
    I_{r,m}[k]\triangleq \mathbbm{1}\{R_{r,m}[kN_s]=s_N\}.
    \label{eq:app_indicator_def}
\end{equation}
Then
\begin{equation}
    z_k^{(r)}=\sum_{m=1}^{Qa_r} I_{r,m}[k].
    \label{eq:app_zkr_sum}
\end{equation}

\subsection{Covariance derivation}
For $k\ge \ell\ge 1$, we have
\begin{equation}
\begin{aligned}
    \Cov(z_k,z_{k-\ell})
    &=
    \Cov\!\left(\sum_{r=0}^{k} z_k^{(r)},\,\sum_{r'=0}^{k-\ell} z_{k-\ell}^{(r')}\right) \\
    &=
    \sum_{r=0}^{k}\sum_{r'=0}^{k-\ell}
    \Cov\!\left(z_k^{(r)},z_{k-\ell}^{(r')}\right).
\end{aligned}
\label{eq:app_cov_expand}
\end{equation}
Under the non-saturating probe assumption, different batches correspond to disjoint molecule sets and are independent. Hence,
\[
\Cov\!\left(z_k^{(r)},z_{k-\ell}^{(r')}\right)=0,\qquad r\neq r'.
\]
Moreover, $z_{k-\ell}^{(r)}=0$ for $r>k-\ell$. Therefore,
\begin{equation}
    \Cov(z_k,z_{k-\ell})
    =\sum_{r=0}^{k-\ell}\Cov\!\left(z_k^{(r)},z_{k-\ell}^{(r)}\right).
    \label{eq:app_cov_batch_sum}
\end{equation}

Using~\cref{eq:app_zkr_sum}, we obtain
\begin{equation}
\begin{aligned}
\Cov\!\left(z_k^{(r)},z_{k-\ell}^{(r)}\right)
&=
\Cov\!\left(\sum_{m=1}^{Qa_r} I_{r,m}[k],\,\sum_{m'=1}^{Qa_r} I_{r,m'}[k-\ell]\right) \\
&=
\sum_{m=1}^{Qa_r}\sum_{m'=1}^{Qa_r}
\Cov\!\left(I_{r,m}[k],I_{r,m'}[k-\ell]\right).
\end{aligned}
\label{eq:app_cov_within_batch_expand}
\end{equation}
Since molecules within the same batch are independent and identically distributed, the cross-molecule terms vanish for $m\neq m'$, and thus
\begin{equation}
\Cov\!\left(z_k^{(r)},z_{k-\ell}^{(r)}\right)
=
Qa_r\,\Cov\!\left(I_{r,1}[k],I_{r,1}[k-\ell]\right).
\label{eq:app_cov_within_batch}
\end{equation}

Because the $r$-th batch is released at symbol time $r$, the marginal bound probabilities depend only on the elapsed symbol intervals:
\begin{equation}
    \E\{I_{r,1}[k]\}=h_{k-r},
    \qquad
    \E\{I_{r,1}[k-\ell]\}=h_{k-\ell-r}.
    \label{eq:app_E_indicator_k}
\end{equation}
Moreover, for $r\le k-\ell$, the event $I_{r,1}[k-\ell]=1$ is equivalent to
\(
R_{r,1}[(k-\ell)N_s]=s_N
\).
Hence, by the time-homogeneous Markov property,
\begin{equation}
\begin{aligned}
&\Pr\!\left(I_{r,1}[k]=1\mid I_{r,1}[k-\ell]=1\right)\\
&=
\Pr\!\left(R_{r,1}[kN_s]=s_N \mid R_{r,1}[(k-\ell)N_s]=s_N\right) \\
&= (\mathbf P^{\ell N_s})_{N,N}
= \pi_N^{(\ell)}.
\end{aligned}
\label{eq:app_cond_prob_return}
\end{equation}
Therefore,
\begin{equation}
\Pr\!\left(I_{r,1}[k]=1,\,I_{r,1}[k-\ell]=1\right)
=
h_{k-\ell-r}\,\pi_N^{(\ell)}.
\label{eq:app_joint_prob}
\end{equation}
Since the indicators are binary,
\begin{equation}
\begin{aligned}
&\Cov\!\left(I_{r,1}[k],I_{r,1}[k-\ell]\right)\\
&=
\Pr\!\left(I_{r,1}[k]=1,\,I_{r,1}[k-\ell]=1\right)
-\E\{I_{r,1}[k]\}\E\{I_{r,1}[k-\ell]\} \\
&=
h_{k-\ell-r}\bigl(\pi_N^{(\ell)}-h_{k-r}\bigr).
\end{aligned}
\label{eq:app_cov_indicator}
\end{equation}

Substituting~\cref{eq:app_cov_indicator} into~\cref{eq:app_cov_within_batch} and then into~\cref{eq:app_cov_batch_sum} gives
\begin{equation}
    \Cov(z_k,z_{k-\ell})
    =
    Q\sum_{r=0}^{k-\ell} a_r\,h_{k-\ell-r}\bigl(\pi_N^{(\ell)}-h_{k-r}\bigr).
    \label{eq:app_cov_final_z}
\end{equation}
Since $\Cov(w_k,w_{k-\ell})=\Cov(z_k,z_{k-\ell})$, this proves~\cref{prop:cov_microarray}.

\section{Proof of~\cref{cor:lag_decorrelation}}
\label{app:proof_cor_lag_decorrelation_tau}

Under the continuous-time surrogate, define
\[
\bm x^{(q)}(t)\triangleq \exp(\mathbf A t)\bm e_q .
\]
By~\cref{prop:mixing_rate}, there exists a uniform constant $C_\star>0$ such that
\[
\|\bm x^{(q)}(t)-\bm x^{\mathrm{eq}}\|_\infty
\le C_\star e^{-t/\tau},
\qquad \forall q,\ t\ge 0.
\]
Taking the $N$-th component yields
\[
\bigl|\bm e_N^\top \exp(\mathbf A t)\bm e_q - x_N^{\mathrm{eq}}\bigr|
\le C_\star e^{-t/\tau}.
\]

Using
\[
\pi_N^{(\ell)}=\bm e_N^\top \exp(\mathbf A \ell T_b)\bm e_N,
\qquad
h_i=\bm e_N^\top \exp(\mathbf A i T_b)\bm e_1,
\]
we obtain, for all $i\ge \ell$,
\[
|\pi_N^{(\ell)}-x_N^{\mathrm{eq}}|
\le C_\star e^{-\ell T_b/\tau},
\,
|h_i-x_N^{\mathrm{eq}}|
\le C_\star e^{-iT_b/\tau}
\le C_\star e^{-\ell T_b/\tau}.
\]
Hence,
\[
|\pi_N^{(\ell)}-h_i|
\le
|\pi_N^{(\ell)}-x_N^{\mathrm{eq}}|
+
|h_i-x_N^{\mathrm{eq}}|
\le
2C_\star e^{-\ell T_b/\tau}.
\]

Substituting this bound into~\cref{prop:cov_microarray} and using
\[
0\le h_{k-\ell-r}\le 1,
\qquad
a_r\in\{0,1\},
\]
we obtain
\[
|\Cov(w_k,w_{k-\ell})|
\le
2Q C_\star (k-\ell+1)e^{-\ell T_b/\tau}.
\]
Therefore,~\cref{eq:cov_decay_bound_tau} holds with $C=2Q C_\star$.

\bibliographystyle{IEEEtran}
\bibliography{IEEEabrv,refs}

\end{document}

%% file: preamble.tex
\usepackage{amsmath, amssymb, amsfonts, bm, mathtools}
\usepackage{amsthm}

\usepackage{nicematrix}
\usepackage{arydshln}

\newtheorem{corollary}{Corollary}

\newtheorem{proposition}{Proposition}

\newtheorem{remark}{Remark}
\usepackage[version=4]{mhchem}
\usepackage{chemfig}

\usepackage{algorithm}
\usepackage{algorithmic}

\usepackage{graphicx}
\usepackage{subfigure}
\usepackage{svg}
\usepackage{tikz}
\usepackage[american]{circuitikz}
\usetikzlibrary{
  arrows.meta, 
  decorations.markings, 
  circuits, 
  automata,
  arrows, 
  positioning, 
  calc
}

\usepackage{booktabs}
\usepackage{makecell}

\usepackage{textcomp}
\usepackage{soul}       
\usepackage{verbatim}
\usepackage{xcolor} 

\usepackage{cite}
\usepackage[hidelinks]{hyperref}
\usepackage[capitalise,noabbrev]{cleveref}
\Crefformat{figure}{#2Fig.~#1#3}
\Crefformat{equation}{#2Eq.~(#1)#3}
\crefname{equation}{Eq.}{Eqs.}
\crefname{figure}{Fig.}{Figs.}

\usepackage{xr}
\usepackage{xr-hyper} 

\usepackage[
  range-phrase=--,
  per-mode=symbol-or-fraction,
  range-units=single,
  list-units=single,
  detect-all,
  list-final-separator={, and },
]{siunitx}

\usepackage{acro}
\usepackage{mfirstuc}

\DeclareAcronym{MC}{short=MC, long=molecular communication}
\DeclareAcronym{MCvD}{short=MCvD, long=molecular communication via diffusion}
\DeclareAcronym{IM}{short=IM, long=information molecule, short-plural=s, long-plural=s}
\DeclareAcronym{TX}{short=TX, long=transmitter, short-plural=s, long-plural=s}
\DeclareAcronym{RX}{short=RX, long=receiver, short-plural=s, long-plural=s}
\DeclareAcronym{IoBNT}{short=IoBNT, long=Internet of Bio-Nano Things}
\DeclareAcronym{ssDNA}{short=ssDNA, long=single-stranded DNA}
\DeclareAcronym{cDNA}{short=cDNA, long=complementary DNA}

\DeclareAcronym{EM}{short=EM, long=electromagnetic}
\DeclareAcronym{FSO}{short=FSO, long=free-space optical}

\DeclareAcronym{ML}{short=ML, long=maximum-likelihood}
\DeclareAcronym{LS}{short=LS, long=least-squares}
\DeclareAcronym{NNLS}{short=NNLS, long=non-negative least squares}
\DeclareAcronym{MAP}{short=MAP, long=maximum a posteriori}
\DeclareAcronym{MMSE}{short=MMSE, long=minimum mean squared error}
\DeclareAcronym{MIMO}{short=MIMO, long=multiple-input multiple-output}
\DeclareAcronym{CDMA}{short=CDMA, long=code division multiple access}

\DeclareAcronym{MSE}{short=MSE, long=mean squared error}
\DeclareAcronym{FIM}{short=FIM, long=Fisher information matrix}
\DeclareAcronym{CRB}{short=CRB, long=Cramér-Rao bound}

\DeclareAcronym{SNR}{short=SNR, long=signal-to-noise ratio}
\DeclareAcronym{SINR}{short=SINR, long=signal-to-interference-plus-noise ratio}

\DeclareAcronym{LTI}{short=LTI, long=linear time-invariant}

\DeclareMathOperator{\Var}{Var}
\DeclareMathOperator{\Cov}{Cov}
\DeclareMathOperator{\E}{\mathbb{E}}

\usepackage{standalone}

\usepackage{listofitems} 
\usepackage[outline]{contour} 
\contourlength{1.4pt}

\colorlet{myred}{red!80!black}
\colorlet{myblue}{blue!80!black}
\colorlet{mygreen}{green!60!black}
\colorlet{myorange}{orange!70!red!60!black}
\colorlet{mydarkred}{red!30!black}
\colorlet{mydarkblue}{blue!40!black}
\colorlet{mydarkgreen}{green!30!black}

 \usepackage{bbm}

\usepackage{titlesec}
\titlespacing\section{0pt}{6pt plus 2pt minus 2pt}{4pt plus 2pt minus 2pt}
\titlespacing\subsection{0pt}{4pt plus 2pt minus 2pt}{3pt plus 2pt minus 2pt}